\documentclass[10pt]{amsart}

\usepackage[a4paper,margin=1.1in]{geometry}

\usepackage{amsmath,amssymb,amsthm,mathtools}
\usepackage[T1]{fontenc}
\usepackage{microtype}

\usepackage{graphicx}
\usepackage{booktabs}
\usepackage{xcolor}
\usepackage{titlesec}
\usepackage{footnote}
\usepackage{algorithm}
\usepackage{algpseudocode}
\usepackage{mathrsfs}
\usepackage{adjustbox}
\usepackage{threeparttable}

\titleformat{\section}
  {\normalfont\Large\bfseries}
  {\thesection}
  {1em}
  {}

\titleformat{\subsection}
  {\normalfont\large\bfseries}
  {\thesubsection}
  {1em}
  {}

\titleformat{\subsubsection}
  {\normalfont\normalsize\bfseries}
  {\thesubsubsection}
  {1em}
  {}

\usepackage[
    colorlinks=true,
    linkcolor=blue,
    citecolor=blue,
    urlcolor=blue
]{hyperref}

\newtheorem{theorem}{Theorem}[section]
\newtheorem{proposition}[theorem]{Proposition}
\newtheorem{lemma}[theorem]{Lemma}

\theoremstyle{definition}
\newtheorem{definition}[theorem]{Definition}
\newtheorem{assumption}[theorem]{Assumption}

\theoremstyle{remark}
\newtheorem{remark}[theorem]{Remark}

\makeatletter
\renewcommand\paragraph{\@startsection{paragraph}{4}{0pt}%
                                    {0.85ex \@plus0ex \@minus.2ex}%
                                    {-1em}%
                                    {\normalfont\normalsize\bfseries}}
\renewcommand\subparagraph{\@startsection{subparagraph}{5}{\parindent}%
                                       {3.25ex \@plus1ex \@minus .2ex}%
                                       {-1em}%
                                      {\normalfont\normalsize\bfseries}}
\makeatother

\newcommand{\ccA}{{\mathscr A}}  
\newcommand{\ccL}{{\mathscr L}}
\newcommand{\ccB}{{\mathscr B}}

\newcommand{\ccF}{{\mathscr F}}

\newcommand{\ccH}{{\mathscr H}}
\newcommand{\ccP}{{\mathscr P}}
\newcommand{\ccE}{{\mathscr E}}

\newcommand{\ccN}{{\mathscr N}}

\renewcommand{\ccP}{{\mathscr P}}

\usepackage{titlesec}

\title[]{Filtering Credit Risk with Stochastic Discontinuities}

\author{Felix B. Tambe-Ndonfack$^*$}\thanks{$^*$Department of Mathematical Stochastics, Mathematical Institute, University of Freiburg, Ernst-Zermelo-Str. 1, 79104, Freiburg, Germany. Email: \texttt{felix.ndonfack@stochastik.uni-freiburg.de}}

\begin{document}
\maketitle
\begin{abstract}
    We develop a structural credit-risk model under incomplete information in which investors observe firm value only indirectly through noisy market signals and scheduled corporate disclosures. While disclosure dates are known in advance, their informational content is random, leading to stochastic discontinuities in the observation process. We derive the Kushner-Stratonovich equation for structural credit-risk models with endogenous default by applying the nonlinear filtering framework with predictable jumps. We then study the valuation and local risk-minimization hedging for default-sensitive securities under partial information. The interaction between predictable disclosure events and endogenous default produces discrete adjustments in the conditional default compensator, leading to announcement-driven distortions in credits spreads and hedge ratios that are absent from classical diffusion-based and inaccessible-jump models. Numerical experiments illustrate how scheduled disclosures affect filtered default probabilities, Credit Default Swaps (CDS) spreads, and hedging strategies, generating characteristic pre-announcement dynamics in credit spreads. 
\end{abstract}
\medskip

\noindent\textbf{Keywords:}
Credit risk; Incomplete information; Stochastic discontinuities;
Kushner--Stratonovich equation; Filtering; Hedging.

\medskip

\noindent\textbf{2020 Mathematics Subject Classification:}
91G40, 93C41, 60G35, 91G20.

\section{Introduction}
Corporate credit markets are characterized by two fundamental sources of uncertainty. First, investors do not observe the firm's asset value directly and must infer its financial condition from imperfect public information. Second, economically significant information is released through scheduled events such as earnings announcements, credit-rating reviews, regulatory filings, and covenant tests. Although the timing of these disclosures is known in advance, their informational content is uncertain and frequently leads to abrupt revisions in market assessments of firm value, default risk, and credit spreads. Incorporating these predictable information releases into structural credit-risk models is therefore essential for understanding the dynamics of defaultable securities under incomplete information.

Existing structural credit-risk models capture only part of this mechanism. Classical incomplete-information models describe how investors learn about an unobservable firm value from noisy observations but assume continuous asset dynamics, thereby excluding abrupt revisions generated by scheduled disclosures. Conversely, structural jump-diffusion models allow discontinuous changes in firm value but typically model jump times as totally inaccessible stopping times, making them unsuitable for representing publicly announced information releases. Consequently, the interaction between endogenous default, predictable disclosure events, and partial observation remains largely unexplored.

In this paper, we develop a structural credit-risk model under incomplete information in which investors observe the firm only indirectly through noisy accounting signals and scheduled public disclosures. Disclosure dates are deterministic, or more generally predictable, while the informational content of each announcement is random. As a consequence, both the latent firm value and the observable information process exhibit stochastic discontinuities at predictable times. This naturally leads to a nonlinear filtering problem with predictable jumps. Building on the general filtering framework of Schmidt and Tambe-Ndonfack \cite{schmidt2026nonlinear}, we derive the Kushner-Stratonovich equation corresponding the conditional distribution of the firm's asset value and apply it to the valuation and hedging of default-sensitive securities.

\subsection{Related literature}
The literature most closely related to this paper can be organized around three strands: structural credit-risk models, incomplete information and nonlinear filtering, discontinuous asset dynamics and predictable stochastic discontinuities, and hedging under restricted information. Each strand provides an essential component of our framework.

The first concerns structural credit-risk models, in which default is modeled endogenously through the dynamics of the firm's assets. Beginning with the seminal contributions of Merton \cite{merton1974pricing} and Black and Cox \cite{black1976valuing}, these models define default endogenously through firm-value dynamics and have become a central framework for the analysis of corporate credit risk. A comprehensive treatment of structural credit-risk modeling, including valuation and hedging of credit risk-sensitive securities, is provided by Bielecki and Rutkowski \cite{bielecki2002credit}. While structurally appealing, classical complete-information models have well-known empirical limitations; for example  Collin-Dufresne et al. \cite{collin2001determinants} show that leverage and asset volatility explain only a limited fraction of observed credit-spread variation.

The second strand introduces incomplete information into structural credit-risk models. Duffie and Lando \cite{duffie2001term} show that noisy accounting signals generate strictly positive short-term credit spreads. Guo et al. \cite{guo2009credit} develop credit-risk models with incomplete information and demonstrate how filtering techniques can be used to characterize default probabilities and credit spreads when relevant state variables are only partially observed. Frey and Schmidt \cite{frey2009pricing} develop a continuous-time structural model under partial information using nonlinear filtering techniques, which is further extended in Frey et al. \cite{frey2019corporate}. Frey and Runggaldier \cite{frey2010pricing} similarly develop a nonlinear-filtering approach to pricing credit derivatives under incomplete information, emphasizing the role of the investor's conditional distribution of the latent firm value in valuation. The broader role of filtering and incomplete information in credit-risk modeling is discussed by Frey and Schmidt \cite{frey2011filtering}, while Fontana and Runggaldier \cite{fontana2010credit} study filtering and parameter estimation in credit-risk models with incomplete information.
More recently, Schmidt and Novikov \cite{schmidt2008structural} consider an unobserved default boundary as an additional source of incomplete information. These contributions establish nonlinear filtering as a natural framework for structural credit-risk models under partial observation. However, the information structure in these models is generally driven by a continuous observation or standard default-related signals and does not explicitly capture predictable disclosure events that generate state-dependent discontinuities and discrete Bayesian updates in investors' beliefs. Our work extends this literature by incorporating scheduled information releases directly into the dynamics of the latent firm value and the observation process.  

The third strand considers structural models with jump dynamics. Jump-diffusion specifications such as Carr and Linetsky \cite{carr2006jump} and Linetsky \cite{linetsky2006pricing} allow discontinuous asset-value dynamics but typically assume that jumps occur at totally inaccessible stopping times generated by Poisson random measures. Consequently, these models cannot represent situations in which the timing of information releases is publicly known in advance. Closely related is the literature on stochastic discontinuities in term-structure modeling. In particular Gehmlich and Schmidt \cite{gehmlich2018dynamic} introduce predictable discontinuities through the default boundary, while Fontana et al. \cite{fontana2024term} and Fontana and Schmidt \cite{fontana2018general} develop multiple-curve and general term structure models with stochastic discontinuities. Our approach is complementary: we introduce predictable informational shocks directly into the firm's asset-value dynamics while preserving the first-passage structural framework of Frey and Schmidt \cite{frey2009pricing}.

The fourth strand concerns hedging under incomplete or restricted information. When investors do not have access to the full market filtration, the resulting market is generally incomplete from their perspective, and classical replication arguments may no longer apply. This motivates hedging approaches based on local risk minimization and orthogonal decompositions under the investor's restricted information. Ceci et al. \cite{zbMATH06514479} study local risk minimization under restricted information on asset prices and provide a general framework for constructing hedging strategies when the investor observes only partial market information. This perspective is closely related to the present setting, where the latent firm value is unobservable and investors must base their hedging decisions on the filtered conditional distribution generated by noisy observations and scheduled disclosures.

The four strands above provide, respectively, the endogenous default mechanism, the nonlinear-filtering framework for partial information, the modeling of discontinuities in financial dynamics, and the appropriate hedging methodology under restricted information. However they are generally developed separately. In particular, existing structural credit-risk models under incomplete information typically do not incorporate predictable disclosure-driven discontinuities, while models with jumps generally do not address the filtering problem induced by publicly known jump times and partially observed jump sizes. As a result, the joint interaction between endogenous first-passage default, incomplete information, predictable jump times, and hedging under the investor's restricted filtration remains largely unexplored.

The present paper brings these strands together by specializing the nonlinear filtering framework with predictable jump times developed by Schmidt and Tambe-Ndonfack \cite{schmidt2026nonlinear} to a structural credit-risk setting with endogenous first-passage default. This provides a unified framework for pricing and hedging under incomplete information in the presence of predictable disclosure events. In particular, the model combines three features that are typically treated separately in the existing literature: endogenous default through a first-passage mechanism, incomplete information about the firm's latent asset value, and stochastic discontinuities occurring at predictable disclosure dates.

\subsection{Main Contribution}
The paper makes four principal contributions. First, we develop a structural credit-risk model that incorporate predictable disclosures events into the dynamics of the firm's asset value while preserving the endogenous first-passage default mechanism. Unlike existing structural filtering models, the proposed framework captures discrete revisions in firm value and investor beliefs generated by scheduled information releases. Second, we derive the nonlinear filtering equations governing investors' conditional beliefs. By specializing the general Kushner-Stratonovich equation with predictable jumps to the structural credit-risk setting, we obtain a recursive characterization of the conditional distribution of the latent firm value. This conditional distribution becomes the fundamental state variable for pricing and risk management under incomplete information. Third, we characterize the valuation of default-sensitive securities under partial information. Equity values, survival probabilities, defaultable bond prices, and CDS spreads are expressed as conditional expectations with respect to the filtered distribution, thereby extending classical structural pricing formulas to markets with predictable informational shocks. Finally, we study hedging in this incomplete information environment. Since the firm's asset value is unobservable and disclosure shocks are only partially predictable, perfect replication is generally impossible. We therefore develop locally risk-minimizing hedging strategies based on the Galtchouk-Kunita-Watanabe decomposition under the investor filtration and propose a particle-filter algorithm for their numerical implementation.

The model generates several economically relevant implications. Scheduled disclosures events induce discrete Bayesian updates in investors' beliefs, producing announcement-driven adjustments in default probabilities, credit spreads and hedge ratios that are absent from both continuous-diffusion models and structural models with totally inaccessible jumps. The numerical experiments illustrate these mechanisms and quantify their effects on the pricing and risk management of corporate securities. 

\subsection{Organization of the Paper}
The remainder of the paper is organized as follows: Section
\ref{sec:Model} introduces the structural credit-risk model and the information structure; Section \ref{sec: full information} studies the valuation of default sensitive securities under complete information; Section \ref{sec : the incomplete market} develops the nonlinear filtering equations and derives pricing formulas under partial information; Section \ref{sec 5: Hedging} studies local risk-minimization hedging under incomplete information. Section \ref{sec:numerical} presents the particle filter algorithm and numerical illustrations. Technical proofs are collected in the Appendix.

\section{The Model}\label{sec:Model}

We work on a filtered probability space $(\Omega,\ccF,\mathbb F=(\ccF_t)_{t\ge 0},\mathbb P)$, satisfying the usual conditions, where $\mathbb P$ denotes the physical (probability) measure and and $\mathbb F$ represents the full-information filtration generated by the firm's underlying sate variables. We assume that the financial market is arbitrage-free, so that there exists an equivalent measure $\mathbb{Q}\sim \mathbb{P}$. Under $\mathbb Q$, all discounted traded asset prices are local martingales. Throughout the pricing and hedging analysis, expectations are taken under the risk-neutral measure $\mathbb Q$, whereas $\mathbb P$ is used to describe the real-world dynamics whenever appropriate.  We also assume a constant risk-free interest rate $r\ge 0$. The model is specified by four primitives: the firm asset value $V=(V_t)_{t \ge 0}$, the liability (default) barrier $K=(K_t)_{t\ge 0}$, the cumulative dividend process $D=(D_t)_{t \ge 0}$ and an increasing sequence of predictable announcement dates $0<T_1<T_2<\ldots$. Following \cite{frey2009pricing}, default occur when the asset value  first falls below the liability barrier,
$$\tau = \inf\{t>0: V_t \le K_t\},$$
where both $V$ and $K$ are assumed càdlàg with strictly positive initial conditions. Under the assumptions stated below, $\tau$ is an $\mathbb F$-stopping time.
For simplicity, dividend payments and public disclosures occur at the same predictable dates $(T_n)_{n \ge 1}$. Allowing different predictable schedules requires only minor notational modifications. 
Let $(d_n)_{n \ge 1}$ denote the dividend payments at time $T_n$. The cumulative dividend process is  
$$D_t := \sum_{n\ge 1} \mathbf{1}_{\{T_n\le t\}}d_n,$$
where $\mathbf{1}_{A}$ is the indicator function of a set $A$ and we write 
$d_{t^-} = \sum_{n\ge 1} d_n \mathbf{1}_{\{T_n< t\}}$
for the most recent dividend level immediately before time $t$.

\subsection{Asset Dynamics}
The firm's asset value evolves continuously between  scheduled disclosure dates and may experience state-dependent jumps at announcement times. Let $0<T_1<T_2<\ldots$ be an increasing sequence of predictable stopping times representing publicly scheduled disclosure events. At each announcement date $T_n$, the jump is characterized by a random mark $Z_n = (\xi_n, \eta_n) \in \ccE$, $\ccE = \mathbb R^m \times \mathbb R^p$ with $m,p \in \mathbb N$, where $\xi_n$ determines the asset-value jump and $\eta_n$ represents the observation noise revealed through the public disclosure. The sequence $(Z_n)_{n\ge 1}$ is assumed i.i.d and independent of the Brownian motion driving the continuous dynamics. To describe the predictable disclosure events, define the marked point measure
\begin{align}\label{random measure}
    \psi(dt,dz) = \sum_{i=1}^\infty \delta_{(T_i, Z_i)}(dt,dz),
\end{align}
where $z = (\xi, \eta) \in\ccE$. The effect of a disclosure is specified through a measurable jump amplitude function $G: \mathbb R_{\ge 0} \times \mathbb R^p \times \mathbb R^m  \to \mathbb R,$ allowing the jump size to depend on the pre-announcement firm value, the prevailing observable information, and the disclosure mark. 
Consequently, the asset value dynamics under $\mathbb{Q}$ are given by
\begin{align}\label{eq:asset_dynamics}
dV_t = V_{t^-} \big(\mu_V dt + \sigma_V dW_t\big)- \kappa dD_t + \int_{\mathbb R^m} G(V_{t^-}, Y_{t^-}, \xi) \, \psi(dt, dz),
\end{align}
where $\mu_V \in \mathbb{R}$, $\sigma_V > 0$, and $\kappa \in [0,1]$ determines the financing mechanism for dividend payments as in \cite{frey2009pricing}. Equation \eqref{eq:asset_dynamics} implies that, at each announcement date $T_n$,
\begin{align}\label{announcement date}
    V_{T_n} = V_{T_n^-} + G(V_{T_{n}^{-}},Y_{T_{n}^{-}},\xi_n)-\kappa d_n,
\end{align}
while between announcement dates the asset value follows a  geometric Brownian motion. The specification of $G$ is intentionally flexible. It accommodates multiplicative disclosure shocks $G(v,y,\xi)=-\gamma(y,\xi)v$, additive shocks $G(v,y,\xi)=\gamma(y,\xi)$, as well as more general state-dependent parameterizations. In particular, its dependence on the observable signal $Y_{t^-}$ allows information shocks to vary depending on the current informational context. A representative specification is as follows:
$$G(v,y,\xi)=-\xi v\bigg(
1+\beta\frac{|y-\bar y|}{\bar y}\bigg),\qquad \beta\ge0,$$
where $\beta$ measures the sensitivity of disclosure shocks to deviations of the observable signal from a reference level $\bar y$. This specification generates state-dependent announcement effects while preserving the general formulation of the model.
\paragraph{Standing Assumption}
Throughout the paper, we work under the following standing assumptions.
\begin{assumption}\label{ass:standing}
    The model satisfies: 
    \begin{enumerate}
    \item[(i)] The Brownian motion $W$ and the sequence of disclosure marks $(Z_n)_{n\ge 1}$ are mutually independent under $\mathbb Q$, and the marks  are i.i.d.
    \item[(ii)] The jump amplitude function $G$ is Borel measurable and satisfies $G(v,y,\xi)\ge -(1-\epsilon)v$ for some $\epsilon \in (0,1)$ and every admissible $(v,y,\xi)$. Moreover, there exists a constant $c < \epsilon/\kappa$ (with the convention that this condition is void if $\kappa=0$) such that $d_n \le c V_{T_n^-}$ a.s. for every $n$ (ensure $V>0$, Lemma \ref{positivity of asset value}). 
     
    \item[(iii)] There exists $p>2$ such that
    $$\sup_{n\ge 1} E^\mathbb Q[|1+G(V_{T_{n}^{-}},Y_{T_{n}^{-}},\xi_n)/V_{T_{n}^{-}}|^p]<\infty.$$
    \item[(iv)] The predictable announcement dates satisfy, $ \#\{n: T_n \le t\}<\infty$ almost surely for every $t\ge 0$.
\end{enumerate}
\end{assumption}
Assumption \ref{ass:standing} guarantees that the asset value process is well defined and excludes arbitrage under the pricing measure $\mathbb{Q}$. The integrability condition are standard for jump-diffusion models and ensure that the filtering, pricing, and hedging problems considered in the sequel are well posed.

\paragraph{Liability dynamics and dividend policy}
Following Frey and Schmidt \cite{frey2009pricing}, the liability process evolves according to
\begin{align}\label{eq: liability}
    K_t = K_0 + (1-\kappa)D_t, \quad K_0 >0,
\end{align}
where $\kappa \in [0,1]$ determines the financing mechanism for dividend payments. Consequently, $\Delta K_{T_n}=(1-\kappa)d_n$, so that liabilities increase only when dividends are financed through debt issuance. The extreme cases $\kappa=1$ and $\kappa=0$ corresponds to pure asset financing and pure debt financing, respectively. Default occurs whenever $V_t \le K_t$, and, in particular, at an announcement date $T_n$
$$V_{T_n} \le K_{T_n^-}+(1-\kappa)d_n.$$
Dividend payment are assumed to provide noisy information about the firm's  financial condition. Their conditional distribution, given the pre-announcement state, is described by transition kernel
\begin{align}\label{kernel}
    \nu_d (dx|d_{n-1},v,y) = \nu_d(x|d_{n-1},v,y)dx,
\end{align}
where $v= V_{T_{n}^{-}}$ and $y=Y_{T_{n}^{-}}$. We assume that $x \to \nu_d(x|d_{n-1},v,y)$ is jointly measurable and continuous in $v$ for almost every $x$, ensuring that the Bayesian filtering update developed in Section \ref{filtering sec} is well defined. For the numerical examples, we employ the autoregressive specification
\begin{align}\label{autoregressive}
    d_n = \lambda d_{n-1} + (1-\lambda)\delta_n V_{T_{n}^{-}}
\end{align}
with $\delta_n$ i.i.d sequence of random payout proportions. 

\subsection{Market information and Investor filtration}\label{subsec:marketinfo}

Investors do not observe the firm's asset value $V$ directly. Instead, they infer the firm's financial condition from publicly available information consisting
\begin{enumerate}
    \item \emph{Default information.} The investors observe whether default has occurred. Equivalently, they observe the default indicator process $I_t := \mathbf{1}_{\tau \le t}$.
    \item \emph{Dividend information.} Investors observe the cumulative dividend process $D$.
    \item \emph{Liability information.} Since the initial liability level $K_0$ is publicly known and the liability dynamics are deterministic conditional on dividend payments, the liability process $(K_t)_{t\ge 0}$ is observable.
    \item \emph{Disclosure signals.} Investors observe a càdlàg signal process $(Y_t)_{t\ge 0}$. 
\end{enumerate}
The disclosure process is assumed to be càdlàg and piecewise constant between announcement dates. At each predictable disclosure time $T_n$, it is updated according to
    $$\Delta Y_{T_n} = Y_{T_n} - Y_{T_{n}^{-}}=f(V_{T_{n}^{-}},Y_{T_{n}^{-}})+\eta_n$$
    where $f$ is a measurable observation function and $\eta_n$ denotes the observation noise component of the mark $Z_n = (\xi_n, \eta_n)$.
Accordingly, the investor filtration is defined by $\mathbb H =(\ccH_t)_{t\ge 0}$, where
\begin{align}\label{investor filtration}
    \ccH_t := \sigma \big(I_s, D_s, Y_s: 0\le s \le t\big)\vee \ccN,
\end{align}
with $\ccN$ denotes the collection of $\mathbb Q$-null sets. By construction, $\ccH_t \subset \ccF_t,\quad t\ge 0$, and the latent asset value process $V$ is generally not adapted to $\mathbb H$. Since the announcement dates $(T_n)_{n\ge 1}$ are publicly scheduled, they are assumed to be predictable stopping times with respect to both $\mathbb{F}$ and $\mathbb H$. Consequently, investors know the timing of future disclosures but not their informational content before realization. 

\subsection{The full information state process} 
Under the full information filtration $\mathbb F$, the firm's  asset value is directly observable. The complete state vector is therefore $(V_t, K_t,Y_t, d_{t^-})$, where the components denote the asset value, liability level, public disclosure, and most recently observed dividend, respectively. Combining Equations \eqref{eq:asset_dynamics} and \eqref{eq: liability}, and the disclosure dynamics yields the following hybrid jump-diffusion system 
\begin{align}\label{sys:full_info}
    \begin{aligned}
        dV_t &= V_{t^-} \, \big(\mu_V \, dt + \sigma_V \, dW_t\big)- \kappa dD_t + \int_{\mathbb R^m} G(V_{t^-},Y_{t^-}, \xi) \, \psi(dt, dz), \\
        dK_t &= (1-\kappa)D_t,\\
        dY_t &= \int_{\mathbb R^{m+p}} \big( f(V_{t^-},Y_{t^-}) + \eta\big) \, \psi(dt,dz),\\
        dD_t &= \sum_{n \ge 1}d_n \delta_{T_n}(dt).
    \end{aligned}
\end{align}
Between announcement date, the state process evolves continuously according to the diffusion dynamics of the asset value, while $K$, $Y$, and $D$ remain constant. At each predictable disclosure time $T_n$, the state variables are updated simultaneously according to
$$V_{T_n}=V_{T_n^-} + G(V_{T_n^-},Y_{T_{n-1}},\xi)-\kappa d_n, \qquad K_{T_n}=K_{T_n^-} + (1-\kappa)d_n,$$
$$Y_{T_n} =Y_{T_{n}^{-}} + f(V_{T_{n}^{-}},Y_{T_{n}^{-}})+\eta_n$$
The process $(V_t, K_t, d_{t^-}, Y_t)_{t \ge 0}$ is Markov with respect to the full information filtration $\mathbb F$ under the standing assumption. This Markov structure forms the basis for the nonlinear filtering problem developed in Section \ref{filtering sec}, where only the subfiltration $\mathbb{H}$ generated by publicly observable quantities is available. 

\begin{lemma}\label{positivity of asset value}
    Suppose Assumption \ref{ass:standing} holds and $V_0 >0$. Then $V_t>0$ $\mathbb Q$-a.s. for all $t \ge 0$.
\end{lemma}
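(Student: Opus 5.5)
The plan is to argue path by path, splitting $[0,\infty)$ into the random intervals between consecutive announcement dates. By Assumption \ref{ass:standing}(iv), only finitely many $T_n$ lie in any compact interval $[0,t]$. Hence it suffices to prove, by induction on $n$, that $V>0$ on $[T_{n-1},T_n)$ and at $T_n$ itself, with $T_0:=0$. On the open stretch between announcements, $\psi$ charges no mass and $D$ is constant. So \eqref{eq:asset_dynamics} reduces to the linear SDE $dV_t=V_{t}(\mu_V\,dt+\sigma_V\,dW_t)$, and we can solve it explicitly:
$$V_t=V_{T_{n-1}}\exp\Big(\big(\mu_V-\tfrac12\sigma_V^2\big)(t-T_{n-1})+\sigma_V\,(W_t-W_{T_{n-1}})\Big),\qquad t\in[T_{n-1},T_n).$$
If $V_{T_{n-1}}>0$, this gives $V_t>0$ on the interval, and also $V_{T_n^-}>0$, since the left limit is again an exponential of a finite quantity. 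Here the strong Markov property and independence of $W$ from the marks (Assumption \ref{ass:standing}(i)) are used only to justify this restart at the stopping time $T_{n-1}$. We take strong solvability of the GBM piece as given.

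The key step is the jump at $T_n$. By \eqref{announcement date},
$$V_{T_n}=V_{T_n^-}+G(V_{T_n^-},Y_{T_n^-},\xi_n)-\kappa d_n .$$
Assumption \ref{ass:standing}(ii) gives $G(V_{T_n^-},Y_{T_n^-},\xi_n)\ge -(1-\epsilon)V_{T_n^-}$, so $V_{T_n^-}+G\ge \epsilon V_{T_n^-}$. If $\kappa=0$, we immediately get $V_{T_n}\ge\epsilon V_{T_n^-}>0$. If $\kappa>0$, the dividend bound $d_n\le cV_{T_n^-}$ with $c<\epsilon/\kappa$ yields
$$V_{T_n}\ge (\epsilon-\kappa c)\,V_{T_n^-}>0,$$
because $\kappa c<\epsilon$. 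This closes the induction step. Note that the constraint must hold almost surely for each $n$, and we only need the countable union of the exceptional null sets to be null.

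Finally, we combine the steps. Fix $t$. Almost surely $N_t:=\#\{n:T_n\le t\}<\infty$, so after at most $N_t+1$ steps of the induction the whole path on $[0,t]$ is covered. Intersecting the countably many almost-sure events (one for each $n$, plus the event $N_t<\infty$) gives $V_s>0$ for all $s\le t$, $\mathbb Q$-a.s. In fact this holds simultaneously for all $t$, by taking $t\in\mathbb N$.

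The main obstacle I expect is bookkeeping rather than analysis. Assumption (iv) rules out accumulation of jump times, so the induction reaches every finite horizon. We must also make sure the jump-size condition in (ii) is evaluated at the pre-jump state, which is random and adapted. Integrability condition (iii) is not needed here.
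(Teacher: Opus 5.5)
Your proposal is correct and follows the same route as the paper: positivity of the geometric Brownian motion between announcement dates, the bound $V_{T_n}\ge(\epsilon-\kappa c)V_{T_n^-}>0$ at each $T_n$ from Assumption \ref{ass:standing}(ii), and induction over the announcement dates. Your version is slightly more careful than the paper's in three respects: you keep the factor $\kappa$ in front of $d_n$ (the paper's display drops it), you treat $\kappa=0$ separately, and you use Assumption \ref{ass:standing}(iv) to ensure the induction reaches every finite horizon.
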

\begin{proof}
    Between predictable announcement dates, the process $V$ evolves as a geometric Brownian motion and therefore remains strictly positive almost surely. 
    
    At each announcement date $T_n$
    $$V_{T_n}=V_{T_n^-} + G(V_{T_n^-},Y_{T_{n-1}},\xi)-\kappa d_n \ge \epsilon V_{T_n^-} - d_n \ge (\epsilon-\kappa c) V_{T_n^-}>0.$$
    The result follows inductively over the sequence of predictable announcement dates.
\end{proof}

\section{Pricing Under Full Information}\label{sec: full information}
Under the full information filtration, investors observe the complete state vector $(V_t, K_t, d_{t^-}, Y_t)$ together with all scheduled disclosures and announcement realizations. Under the risk-neutral measure $\mathbb Q$, the resulting state process is Markov. Consequently, prices of default-sensitive securities admit Markovian representation in terms of the current state variables. These full-information pricing formulas serve as the benchmark for the partial information valuation developed in Section \ref{sec: valuation}. 

\subsection{Valuation of Corporate Securities}
\subsubsection{Equity Valuation} \label{sec: valuation}
The pre-default value of the firm's equity under full information is defined as the expected discounted dividend stream under $\mathbb Q$:
\begin{align}\label{temp382}
    S_t := \mathbf{1}_{\{\tau > t\}} \, E^\mathbb Q\left(\int_t^\tau e^{-r(s-t)} dD_s \,\Big|\, \ccF_t\right).
\end{align}
Using the Markov property, there exists a measurable function $S: [0,+\infty) \times \mathbb R_{\ge 0}^3 \times \mathbb R^p\to \mathbb R_{\ge 0}$ such that 
\begin{align}\label{eq: security price}
    S_t = \mathbf{1}_{\tau > t} \, S(t, V_t, K_t, d_t, Y_t).
\end{align}

\subsubsection{Default Probabilities and Defaultable Debt}
For $0\le t \le T$, define the conditional survival probability 
$P(t,T):=\mathbb Q(\tau > T \mid \ccF_t).$
By the Markov property,
\begin{align}\label{survival markov}
    P(t,T) = \mathbf{1}_{\tau > t}\mathbb Q\left(\inf_{u \in (t,T]} (V_u - K_u) > 0 \,\Big|\, V_t, K_t, d_{t^-},Y_t\right)
    &=: \mathbf{1}_{\tau > t}p(t, T, V_t, K_t, d_{t-},Y_t), 
\end{align}
for a measurable survival function $p:[0,+\infty) \times \mathbb R_+^3 \times \mathbb R^p \to [0,1].$ Equivalently, by denoting $F_\tau$ as the conditional distribution function of the default time, we have  $p(t, T, V_t, K_t, d_{t-},Y_t)=1 - F_\tau(t, T, V_t, K_t, d_t,Y_t).$

In contrast to classical diffusion based structural models, default may occur through two distinct mechanisms: \textit{continuous default}, when the diffusion component of the asset process crosses the liability barrier between announcement dates; \textit{Jump-induced default}, when a predictable disclosure shock pushes the firm below the default boundary at an announcement dates. 
More precisely, jump-induced default occurs whenever
    $$V_{T_n}=V_{T_{n}^{-}} + G(V_{T_{n}^{-}}, Y_{T_{n}^{-}}, \xi_n) - \kappa d_n \le K_{T_n}.$$
This hybrid structure implies that the default distribution generally contains both absolutely continuous and discrete components. Economically, this reflects the fact that credit deterioration may occur gradually through persistent financial weakness or abruptly following scheduled disclosure events such as earnings announcements, rating reviews, or covenant reports. 

\paragraph{Defaultable zero-coupon bonds.}
Consider a defaultable zero-coupon bond with maturity $T$ and fractional recovery rate $\delta \in [0,1]$.
Under the risk-neutral measure $\mathbb Q$, the bond price satisfies
\begin{align} \label{bond price}
    B^\delta(t, T) = \mathbf{1}_{\tau > t} \, e^{-r(T-t)} \big[p(t, T, V_t, K_t, d_{t^-},Y_t) + \delta (1-p(t, T, V_t, K_t, d_{t^-},Y_t))\big]. 
\end{align}
In the zero-recovery case $\delta=0$, $B(t, T) = \mathbf{1}_{\{\tau > t\}} \, e^{-r(T-t)} p(t, T, V_t, K_t, d_{t^-},Y_t)$. 

\paragraph{Credit Default Swaps.}
Following the market convention in \cite[Section 10.4.4]{mcneil2015quantitative}, the fair CDS spread  is defined as the premium rate for which the contract has zero value at inception. We consider payment dates $t<t_1<\cdots<t_N$, constant accrual period $\Delta =t_n - t_{n-1}$, deterministic loss-given-default parameter $\ell \in [0,1]$, and ignore accrued premium payments. We formalize this notion in the following proposition.
\begin{proposition}
    The fair CDS spread $x^*$ admits the representation
    \begin{align} \label{fair spread}
    x^* = \mathbf{1}_{\tau>t} \dfrac{\displaystyle \ell \int_t^{t_N} e^{-r(s-t)} F_\tau(t, ds, V_t, K_t, d_t,Y_t)}{\displaystyle \sum_{n:\, t_n >t} e^{-r(t_n - t)}\Delta \big(1 - F_\tau(t, t_n, V_t, K_t, d_t,Y_t)\big)}.
\end{align}
\end{proposition}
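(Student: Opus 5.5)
The plan is to write the CDS contract value at time $t$ on $\{\tau>t\}$ as the difference of two $\mathbb Q$-conditional expectations, the default (protection) leg and the premium leg, and then determine $x^*$ by setting this value to zero. Both legs will be evaluated by the Markov property of $(V_t,K_t,d_{t^-},Y_t)$ under $\mathbb F$, in the form already used in \eqref{survival markov}. This turns the conditional law of $\tau$ given $\ccF_t$ into the kernel $F_\tau(t,\cdot,V_t,K_t,d_t,Y_t)$.

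\textbf{Step 1 (premium leg).} For a spread $x$, the buyer pays $x\Delta$ at each $t_n>t$ provided $\tau>t_n$, since accrued premium is ignored. Its discounted value is
$$x\,E^{\mathbb Q}\Big[\sum_{n:\,t_n>t}e^{-r(t_n-t)}\Delta\,\mathbf 1_{\{\tau>t_n\}}\,\Big|\,\ccF_t\Big]=x\sum_{n:\,t_n>t}e^{-r(t_n-t)}\Delta\,P(t,t_n).$$
On $\{\tau>t\}$ we then have $P(t,t_n)=1-F_\tau(t,t_n,V_t,K_t,d_t,Y_t)$ by \eqref{survival markov}.

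\textbf{Step 2 (default leg).} The seller pays $\ell$ at $\tau$ if $t<\tau\le t_N$. Its value is $\ell\,E^{\mathbb Q}[e^{-r(\tau-t)}\mathbf 1_{\{t<\tau\le t_N\}}\mid\ccF_t]$.

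To express this as a Stieltjes integral, I note that on $\{\tau>t\}$ the regular conditional distribution of $\tau$ given $\ccF_t$ is $F_\tau(t,ds,\cdot)$. This follows from the Markov property and the fact that $\tau$ restricted to $\{\tau>t\}$ is the first passage of the post-$t$ path $(V_u-K_u)_{u>t}$, which is a measurable functional of the future of the Markov state. Integrating the bounded Borel function $s\mapsto e^{-r(s-t)}\mathbf 1_{(t,t_N]}(s)$ against this kernel gives
$$\ell\int_t^{t_N}e^{-r(s-t)}F_\tau(t,ds,V_t,K_t,d_t,Y_t).$$

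\textbf{Step 3 (solving for the spread).} Equating the two legs and solving the linear equation in $x$ gives \eqref{fair spread} on $\{\tau>t\}$. The indicator $\mathbf 1_{\{\tau>t\}}$ simply records that the contract is only written before default.

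The denominator is strictly positive whenever some $t_n>t$ satisfies $F_\tau(t,t_n,\cdot)<1$. I would either assume this (otherwise no spread is defined) or note that it holds because $V$ is a geometric Brownian motion between announcement dates, which gives positive survival probability over short horizons.

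\textbf{Main obstacle.} The main point needing care is Step 2: justifying the disintegration when $F_\tau(t,\cdot)$ has atoms at the predictable dates $T_n$, since jump-induced default occurs there. The Stieltjes integral must be understood over $(t,t_N]$ with these atoms included. I would handle this by working with the Lebesgue–Stieltjes measure of the càdlàg increasing function $s\mapsto F_\tau(t,s,\cdot)$. That measure automatically carries the point masses $\mathbb Q(\tau=T_n\mid\ccF_t)$, so no separate treatment of the continuous and jump-induced default components is needed. Measurability in the state variables is inherited from the measurability of $p$ in \eqref{survival markov}.
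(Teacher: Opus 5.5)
Your proposal is correct and follows essentially the paper's route. The paper gives no separate argument: it invokes the zero-value convention of \cite[Section 10.4.4]{mcneil2015quantitative} and equates the premium and protection legs through the Markov representation \eqref{survival markov}, which is exactly your Steps 1--3, and your treatment of the atoms of $F_\tau(t,\cdot)$ at the announcement dates via the Lebesgue--Stieltjes measure makes explicit a point the paper leaves implicit. The one loose spot is your fallback justification of a positive denominator: when every payment date lies beyond an announcement date, survival can have probability zero under Assumption~\ref{ass:standing} (for instance with $\kappa<1$, a maximal write-down combined with a large debt-financed dividend can force $V_{T_n}\le K_{T_n}$ with certainty), so you should simply assume positivity rather than appeal to the GBM dynamics.
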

The representation in \eqref{fair spread} shows that predictable disclosure risk enters CDS premia through the conditional default distribution. Since default probabilities increase immediately before scheduled announcements, equilibrium CDS spreads reflect anticipated disclosure risk even before new information is released.

\paragraph{Credit Spreads}
The credit spread $s(t,T)$ for a bond with maturity $T$ is defined implicitly by $B^\delta(t, T) = e^{-(r + s(t,T))(T-t)}.$ For simplicity, set $r=0$. Then 
\begin{align}\label{temp447}
    s(t, T) = -\frac{1}{T-t} \log\big(p(t, T, V_t, K_t, d_t,Y_t) +\delta(1-p(t, T, V_t, K_t, d_t,Y_t))\big).
\end{align}
    In practice, the computation of $p(t,T,\cdots)$ is analytically intractable and must therefore rely on numerical methods. This difficulty stems from three intertwined features of the model. First, the default time is defined as a first-passage time of the asset process through a time-varying barrier, which makes survival probabilities inherently path-dependent. Second, the presence of state-dependent jumps at predictable dates introduces discontinuities in the dynamics and gives rise to jump-induced default. Third, the interaction between continuous diffusion component and the discrete predictable jumps prevents a decomposition of default risk into independent continuous and jump parts.

    As a consequence, standard closed-form techniques for structural models are no longer applicable. Instead, Monte Carlo simulation with suitable variance reduction methods such as importance sampling near the default barrier or efficient PDE-based schemes on time-space grids between jump dates provide natural and flexible computational approaches. We discuss these numerical methods in detail in Section \ref{sec:numerical}. The valuation formulas derived above assume that the value of the firm’s assets is directly observable. In practice, however, investors observe only the public information process described in Section \ref{subsec:marketinfo}. The following section therefore replaces the latent state variables with their nonlinear filter and derives valuation formulas under conditions of partial information.

\section{Incomplete Information}\label{sec : the incomplete market} 
Under the investor filtration $\mathbb H$, the firm's asset value is not directly observable. Consequently, the valuation formulas derived in Section \ref{sec: full information} cannot be evaluated from market information alone. This section characterizes the conditional distribution of the latent firm value given the observable filtration and develops pricing formulas under partial information. The analysis is based on the nonlinear filtering framework with stochastic discontinuities developed in  \cite{schmidt2026nonlinear}.

 \subsection{The Filtering Problem} \label{filtering sec}
 Under the investor filtration $\mathbb H$, the liability process $K$, cumulative dividend process $D$, and the public information process $Y$ are observable, whereas the asset value $V$ remains latent. Consequently, valuation reduces to computing conditional expectations with respect to the posterior distribution of $V$. 
For every bounded measurable test function $\varphi \in C_b^2(\mathbb R_{\ge 0})$, define
\begin{align}\label{eq:filtering}
    \pi_t(\varphi) := E^\mathbb Q[\varphi(V_t) | \ccH_t], \quad t \ge 0.
\end{align}
The measure-valued process $\pi=(\pi_t)_{t\ge 0}$ completely characterizes the investor's posterior beliefs about the hidden firm value.
The conditional distribution $\pi_t$ can be equivalently characterized by the family of stochastic processes $\{\pi_t(\varphi): \varphi \in C_b^{2}(\mathbb R_{\ge 0})\}$, where we allow for test functions that may depend on time.

\subsection{The Signal-Observation System}
Since $V_t>0$ before default by Lemma \ref{positivity of asset value}, introduce the logarithmic state variable $X_t = \log V_t$. The transformation converts the multiplicative diffusion dynamics of $V$ into an additive stochastic differential equation and places the model within the filtering framework of \cite{schmidt2026nonlinear}.

\paragraph{Log-transformation and signal dynamics.} 
Between two consecutive announcement dates $T_n$ and $T_{n+1}$, the firm value process satisfies 
$dV_t = V_t(\mu_V dt + \sigma_V dW_t)$. Applying It\^o's formula to $X_t = \log V_t$ yields 
\begin{align*}
    dX_t = d\log V_t = \left(\mu_V - \dfrac{\sigma_V^2}{2}\right) dt + \sigma_V dW_t.
\end{align*}
Define the coefficients
\begin{align*}
    a(X_t) := \mu_V - \dfrac{\sigma_V^2}{2}, \quad b(X_t) := \sigma_V.
\end{align*}
At predictable announcement times $T_n$, the asset value undergoes the jump  
$V_{T_n} = V_{T_n^{-}}+ G(V_{T_n^{-}}, Y_{T_n^{-}}, \xi_n) - \kappa d_n$. Consequently,
\begin{align*}
    X_{T_n} = X_{T_n^{-}} + \log\bigg(\dfrac{V_{T_n}}{V_{T_n^{-}}}\bigg) = X_{T_n^{-}} + \log\left(1 + \frac{G(V_{T_n^{-}}, Y_{T_n^{-}}, \xi_n) - \kappa d_n}{V_{T_n^{-}}}\right).
\end{align*}
We therefore define the jump coefficient:
\begin{align}\label{temp540}
    \tilde c(x, \xi) := c(x,Y_{T_n^{-}}, \xi)= \log\left(1 + \frac{G(e^x, Y_{T_n^{-}}, \xi) - \kappa d_{t^-}}{e^x}\right).
\end{align}

The signal process then satisfies
$$dX_t = a(X_t)dt + b(X_t)dW_t+\int_{\mathbb R^{m+p}} \tilde c(X_{t^{-}},\xi)\psi(dt,dz),$$
where $\psi$ is the predictable jump measure defined in \eqref{random measure} and $\tilde c: \mathbb R \times \mathbb R^m \to \mathbb R$.

\paragraph{Observation Process}
The observable information process $Y$ is piecewise constant and evolves only through jumps at the predictable announcement dates: 
\begin{align}
    Y_t = \sum_{n\ge 1} \mathbf{1}_{T_n \le t} \Delta Y_{T_n}, \quad \text{where} \quad \Delta Y_{T_n} = f(V_{T_n^{-}}, Y_{T_{n-1}}) + \eta_n.
\end{align}
Equivalently, $Y_{T_n}=Y_{T_{n-1}}+f(V_{T_n^{-}}, Y_{T_{n-1}}) + \eta_n$. 
Introducing the transformed observation function $\tilde{f}(x, y) := f(e^x, y),$ the observation equation becomes
$$dY_t = \displaystyle\int_{\mathbb R^{m+p}} \big(\tilde{f}(X_{t^-}, Y_{t^-}) + \eta\big) \, \psi(dt, dz),$$
with a measurable observation function $\tilde f: \mathbb R \times \mathbb R^p \to \mathbb R^p$.
Hence the partially observed system admits the compact representation
\begin{align}\label{eq:log_system}
    \begin{cases}
        dX_t = a(X_t) \, dt + b(X_t) \, dW_t + \displaystyle\int_{\mathbb R^{m+p}} \tilde c(X_{t^-}, \xi) \, \psi(dt, dz), \\[0.5em]
        dY_t = \displaystyle\int_{\mathbb R^{m+p}} \big(\tilde{f}(X_{t^-}, Y_{t^-}) + \eta\big) \, \psi(dt, dz),
    \end{cases}
\end{align}
for $t \ge 0$, with $X_0 = \log V_0$ and $Y_0 = 0$. 
The system \eqref{eq:log_system} is a partially observed diffusion with predictable stochastic discontinuities and therefore falls within the general filtering framework developed in \cite{schmidt2026nonlinear}.

\subsection{Specialization of the Kushner-Stratonovich Equation to Structural Credit Risk}\label{KS for credit risk}
The partially observed system \eqref{eq:log_system} satisfies the assumptions of \cite{schmidt2026nonlinear}. Specializing their general result yields the following Kushner-Stratonovich equation for the present structural credit-risk model.

We introduce the extended measurable space $(\tilde \Omega, \tilde {\ccP})=(\Omega \times \mathbb R_{\ge 0} \times \mathbb R^p, \ccP \otimes \ccB(\mathbb R^p))$, where $\tilde{\ccP}$ denotes the predictable sigma-algebra generated by the left-continuous adapted processes. 

\paragraph{Operators and Notation}
For $\varphi \in C_b^2(\mathbb R;\mathbb R)$, denote by $\ccL$ the generator of the continuous part of $X$, i.e.
\begin{align*}
    \ccL \varphi(x) = a \dfrac{\partial \varphi}{\partial x}(x) + \dfrac{b^2}{2} \dfrac{\partial^2 \varphi}{\partial x^2}(x) = \left(\mu_V - \dfrac{\sigma_V^2}{2}\right) \dfrac{\partial \varphi}{\partial x}(x) + \dfrac{\sigma_V^2}{2} \dfrac{\partial^2 \varphi}{\partial x^2}(x),
\end{align*}
and furthermore, for the jump part
\begin{align*}
    \ccA \varphi(x) = \int_{\mathbb R^{m+p}} \left[\varphi(x + \tilde c(x, \xi)) - \varphi(x)\right] F_Z(d\xi , d\eta),
\end{align*}
where $F_Z$ denotes the distribution of $Z_1$.
As in \cite{schmidt2026nonlinear}, we denote by $\mu$ the measure associated to the observation $Y$ and is given by
\begin{align*}
    \mu(dt, dy) = \sum_{k\ge 1} \delta_{(T_k, \Delta Y_{T_k})}(dt, dy).
\end{align*}
Its $\mathbb H$-compensator proven in \cite[Lemma 2.1]{schmidt2026nonlinear}, is given by
\begin{align*}
    \nu(dt, dy) = \sum_{k=1}^\infty \delta_{T_k}(dt) \, F^k(dy),
\end{align*}
where $F^k(A) = E^\mathbb Q\left[F(A - \tilde{f}(X_{T_{k^-}}, Y_{T_{k^-}})) \,\Big|\, \ccH_{T_{k^-}}\right], \quad A \in \ccB(\mathbb R^p)$
is the regular conditional distribution of $\Delta Y_{T_n}$ given $\ccH_{T_n^{-}}$, with $F:=\mathbb Q(\eta_1 \in \cdot)$ the distribution of $\eta_1$. 

\begin{remark}
    Since $X_{T_{k^-}}$ is not $\ccH_{T_k^{-}}$-measurable, the compensator $F^k(A)$ is computed by integrating over the filtered distribution:
    $$F^k(A) = \int_{\mathbb R} F(A - \tilde{f}(x, Y_{T_{k^-}}))\pi_{T_k^-}(dx),$$
    which is $\ccH_{T_k^{-}}$-measurable since $\pi_{T_k^-}$ is. Hence $\nu$ is a random (filtered) compensator. The compensated random measure $\tilde{\mu}(dt, dy) = \mu(dt, dy) - \nu(dt, dy)$ is an $\mathbb H$-martingale measure by the compensator definition \cite[Lemma 2.1]{schmidt2026nonlinear}.
\end{remark}

\subsubsection{Main Filtering Result}
\begin{assumption} \label{assumption 1}
Assume that for all $t \ge 0$, the functions $a,b,\tilde c,\tilde f$ from \eqref{eq:log_system} satisfy
\begin{align*}
        E\bigg[\displaystyle\int_0^t \int_{\mathbb R^p} \nu(ds,dy) \bigg]< \infty
        ,\quad 
        E\bigg[\displaystyle \int_0^t \|b(X_s)\|^2\,ds \bigg] < \infty,\quad 
        E\bigg[\displaystyle \int_0^t |a(X_s)|\,ds \bigg] < \infty \\
        E\bigg[\displaystyle\int_0^t \int_{\mathbb R^p} |\tilde c(X_{s-})| \,\nu(ds,dy) \bigg]< \infty,\quad E\bigg[\displaystyle\int_0^t \int_{\mathbb R^p} |\tilde f(X_{s-},y)| \, \nu(ds,dy) \bigg]< \infty.
\end{align*}
\end{assumption}
Let $A_t = \sum_{n\ge 1} \mathbf{1}_{T_n \le t}$ denotes the predictable counting process of jump times. The innovation functional captures how beliefs are updated when observation increment $\Delta Y_t = y$ is revealed
\begin{align}\label{eq:form of S}
    S(\varphi)(t, y) = E^\mathbb Q[\varphi(X_t) - \varphi(X_{t^-}) \mid \ccH_{t^-}, \Delta Y_t = y].
\end{align}
Before state the result, we recall that for a random measure $\mu$ and a function $S$, we define the stochastic process $S*\mu$ through
    $$(S*\mu)_t(\omega)=\int_{[0,t]\times\mathbb R^p} S(\omega;s,y)\mu(\omega;ds,dy),$$ if the integral w.r.t $|S|$ is finite, and $+\infty$ otherwise. we then characterizes the default compensator under incomplete information.

\begin{theorem}\label{temp616}
    Under Assumption \ref{assumption 1}, the filter $(\pi_t)$ defined in \eqref{eq:filtering} with $V_t=e^{X_t}$ satisfies, for every $\varphi \in C_b^{2}(\mathbb R;\mathbb R)$,
\begin{align}\label{eq:ks_credit}
    \pi_t(\varphi) = \pi_0(\varphi) + \int_0^t \pi_s(\ccL \varphi) \, ds + \int_0^t \pi_{s^-}(\ccA \varphi ) \, dA_s + (S(\varphi)*\tilde \mu)_t,\quad   t\ge 0.
\end{align}

\end{theorem}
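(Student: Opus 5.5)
We prove \eqref{eq:ks_credit} by splitting the time axis into the open intervals between announcement dates and the announcement dates themselves. Between announcements the observation is constant, so the filter only propagates the continuous dynamics. At each date $T_n$ the filter jumps, and we match that jump with the $dA$ term plus the $S(\varphi)*\tilde\mu$ term. Throughout we may either check that \eqref{eq:log_system} satisfies the hypotheses of the general theorem in \cite{schmidt2026nonlinear} and specialize, or argue directly. We outline the direct argument, since it also shows which hypotheses are being used.

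\textbf{Step 1 (semimartingale decomposition of $\varphi(X)$).} For $\varphi\in C_b^2$, It\^o's formula applied to the first line of \eqref{eq:log_system} gives
\begin{align*}
\varphi(X_t)=\varphi(X_0)+\int_0^t \ccL\varphi(X_s)\,ds+\int_0^t \varphi'(X_s)b\,dW_s+\sum_{n:T_n\le t}\big(\varphi(X_{T_n})-\varphi(X_{T_n^-})\big).
\end{align*}
The stochastic integral is a true $\mathbb F$-martingale because $\varphi'$ is bounded and $b=\sigma_V$ is constant. The jump sum is handled by writing each summand as its $\ccF_{T_n^-}$-conditional expectation plus a martingale increment. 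The conditional expectation equals $\ccA\varphi(X_{T_n^-})$, because $Z_n$ is independent of $\ccF_{T_n^-}$ (Assumption \ref{ass:standing}(i)) and $T_n$ is predictable. Assumption \ref{assumption 1} gives the integrability needed to take conditional expectations.

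\textbf{Step 2 (projection onto $\mathbb H$).} We take $\ccH_t$-conditional expectations of each term.
\begin{itemize}
\item The Lebesgue part projects to $\int_0^t\pi_s(\ccL\varphi)\,ds$, up to an $\mathbb H$-martingale, by the usual Fubini/optional projection argument.
\item The Brownian martingale projects to an $\mathbb H$-martingale. Since $W$ is independent of the marks and $Y$ is constant between announcements, there is no continuous innovation term.
\item The predictable jump compensator $\sum_n\ccA\varphi(X_{T_n^-})\mathbf 1_{T_n\le t}$ projects, since $T_n$ is $\mathbb H$-predictable, to $\int_0^t\pi_{s^-}(\ccA\varphi)\,dA_s$.
\end{itemize}
We use the identity $E[\varphi(X_{T_n^-})\mid\ccH_{T_n^-}]=\pi_{T_n^-}(\varphi)$ for this last projection.

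Consequently $M_t:=\pi_t(\varphi)-\pi_0(\varphi)-\int_0^t\pi_s(\ccL\varphi)ds-\int_0^t\pi_{s^-}(\ccA\varphi)dA_s$ is an $\mathbb H$-martingale. This is the standard lemma that the projection of an $\mathbb F$-martingale, corrected by projected drifts, is an $\mathbb H$-martingale.

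\textbf{Step 3 (identifying $M$, the main obstacle).} We must show $M=S(\varphi)*\tilde\mu$. The key fact is that $\mathbb H$ is generated, up to null sets, by the marked point process $\mu$ of observations together with the initial data. Between the $T_n$ nothing new is observed; the default indicator and $D$ are treated as part of the observed marks at announcement dates, with continuous-default information absorbed into the conditioning. Given this, a martingale representation theorem for the integer-valued measure $\mu$ with predictable jump times yields $M=U*\tilde\mu$ for some $\tilde{\ccP}$-measurable $U$. This is Jacod's representation, applied in the form used in \cite{schmidt2026nonlinear}.

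To identify $U$, we compare jumps at $T_n$. On one side,
\begin{align*}
\Delta M_{T_n}=\pi_{T_n}(\varphi)-\pi_{T_n^-}(\varphi)-\pi_{T_n^-}(\ccA\varphi).
\end{align*}
On the other side, because $\nu$ charges $T_n$,
\begin{align*}
\Delta (U*\tilde\mu)_{T_n}=U(T_n,\Delta Y_{T_n})-\int U(T_n,y)F^n(dy).
\end{align*}
Bayes' rule gives $\pi_{T_n}(\varphi)=E[\varphi(X_{T_n})\mid\ccH_{T_n^-},\Delta Y_{T_n}]$. Hence $\pi_{T_n}(\varphi)-\pi_{T_n^-}(\varphi)=S(\varphi)(T_n,\Delta Y_{T_n})$, once we recall that $\pi_{T_n^-}(\varphi)=E[\varphi(X_{T_n^-})\mid\ccH_{T_n^-}]$ when $\Delta Y_{T_n}$ is integrated out. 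Averaging $S(\varphi)(T_n,\cdot)$ over $F^n$ gives back $\pi_{T_n^-}(\ccA\varphi)$ by the tower property, so
\begin{align*}
\int S(\varphi)(T_n,y)F^n(dy)=\pi_{T_n^-}(\ccA\varphi).
\end{align*}
Taking $U=S(\varphi)$ therefore matches the jumps exactly, which yields \eqref{eq:ks_credit}.

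The delicate points are twofold. First, the default indicator is observed continuously, and we must either justify its negligible effect or encode it in the conditioning. Second, we need the integrability of $S(\varphi)$ with respect to $|\tilde\mu|$; this follows from the boundedness of $\varphi$ together with Assumption \ref{assumption 1}.
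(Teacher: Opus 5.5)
Your outline is the paper's proof: the $\mathbb F$-martingale $M^\varphi$ of \eqref{Mat1}, projection onto $\mathbb H$, the representation $M=U*\tilde\mu$, and identification of $U$ by matching jumps at $T_n$. The paper only cites \cite{schmidt2026nonlinear} for the last step. You carry it out, and it fails at the identity $\pi_{T_n}(\varphi)-\pi_{T_n^-}(\varphi)=S(\varphi)(T_n,\Delta Y_{T_n})$.

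By \eqref{eq:form of S}, $S(\varphi)(T_n,\Delta Y_{T_n})=\pi_{T_n}(\varphi)-E^{\mathbb Q}[\varphi(X_{T_n^-})\mid\ccH_{T_n^-},\Delta Y_{T_n}]$. The subtracted term is not $\pi_{T_n^-}(\varphi)$. Since $\Delta Y_{T_n}=\tilde f(X_{T_n^-},Y_{T_n^-})+\eta_n$, the disclosure carries information about exactly the pre-announcement state. The two terms agree only after integrating $y$ against $F^n$. That is why your identity $\int S(\varphi)(T_n,y)F^n(dy)=\pi_{T_n^-}(\ccA\varphi)$ holds while the pointwise one does not.

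Because $F^n$ is a probability measure, the jump $U(T_n,\Delta Y_{T_n})-\int U(T_n,y)F^n(dy)$ determines $U(T_n,\cdot)$ only up to an additive $\ccH_{T_n^-}$-measurable constant. Your matching therefore forces $U(t,y)=E^{\mathbb Q}[\varphi(X_t)\mid\ccH_{t^-},\Delta Y_t=y]-\pi_{t^-}(\varphi)$. This differs from $S(\varphi)$ by the $y$-dependent term $E^{\mathbb Q}[\varphi(X_{t^-})\mid\ccH_{t^-},\Delta Y_t=y]-\pi_{t^-}(\varphi)$, which cannot be absorbed.

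The difference matters. Take $G\equiv 0$ and $\kappa=0$, which Assumption \ref{ass:standing} allows. Then $\tilde c\equiv 0$, so $\ccA\varphi=0$ and $S(\varphi)\equiv 0$. Equation \eqref{eq:ks_credit} would then say the filter does not move at disclosure dates, even when $f(v,y)=\log v$ makes the disclosure informative. So the identification cannot be completed with $S(\varphi)$ as defined in \eqref{eq:form of S}; you need the integrand $U$ above.

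The default indicator, which you flag as delicate, is a second genuine gap. It also invalidates your remark in Step 2 that there is no continuous innovation term. Between announcement dates, survival is revealed continuously. On $\{\tau>t\}$ with $t\in(T_n,T_{n+1})$, write $\pi_t(\varphi)$ as the ratio $E^{\mathbb Q}[\varphi(X_t)\mathbf 1_{\{\tau>t\}}\mid\ccH_{T_n}]/\mathbb Q(\tau>t\mid\ccH_{T_n})$. Differentiating gives $d\pi_t(\varphi)=\big(\pi_t(\ccL\varphi)+\lambda_t(\pi_t(\varphi)-\varphi(\log K_t))\big)dt$. Here $\lambda_t$ is the $\mathbb H$-default intensity, which exists because the conditional first-passage time has a density.

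So $M$ has a nonvanishing absolutely continuous part on $\{\tau>t\}$. But every $U*\tilde\mu$ is constant between the $T_n$, because $\mu$ and $\nu$ charge only $\{T_n\}\times\mathbb R^p$. Neither ``negligible effect'' nor ``absorbed into the conditioning'' is available. The representation must include an integral against the compensated default indicator, with integrand $\varphi(\log K_s)-\pi_{s^-}(\varphi)$.

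Likewise, $\pi_{T_n}$ conditions on $d_n$ and on survival at $T_n$, not only on $\Delta Y_{T_n}$. So your Bayes step $\pi_{T_n}(\varphi)=E^{\mathbb Q}[\varphi(X_{T_n})\mid\ccH_{T_n^-},\Delta Y_{T_n}]$ is also incomplete.

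The paper's proof makes the same unsupported claim, $M=S*\tilde\mu$ on $\{\tau>t\}$, so you are not behind it here. Neither argument closes the gap, though. You must either remove $I$ and $D$ from $\mathbb H$, or enlarge the mark space and add the default-innovation term.
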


\begin{proof}
The proof follows from \cite[Proof of Theorem 2.5]{schmidt2026nonlinear}, translated to the credit risk system \eqref{eq:log_system}.
    For $\varphi \in C_b^{2}(\mathbb R)$, define 
    \begin{equation} \label{Mat1}
    M^\varphi_t = \varphi (X_{t}) - \varphi (X_0) - \int_0^t \ccL\varphi(X_{s})ds - 
    \int_0^t \ccA \varphi(X_{s-}) dA_s.
\end{equation} 
By \cite[Proposition 4.2]{schmidt2026nonlinear}, $M^\varphi$ is an $\mathbb F$-martingale. Taking conditional expectations on Equation \eqref{Mat1} with respect to $\ccH_t$, using Lemmas 4.3-4.5 of \cite{schmidt2026nonlinear}, yields
    \begin{align}\label{cred 1}
    \pi_t(\varphi) = \pi_0(\varphi) + \int_0^t \pi_s(\ccL \varphi) \, ds + \int_0^t \pi_{s^-}(\ccA \varphi ) \, dA_s + M_t,\quad   t\ge 0.
\end{align}
where $M$ is an $\mathbb H$-martingale. 
Since the observation filtration is generated by the jump measure $\mu$, the martingale representation theorem for integer-valued random measures \cite[Proposition 4.1]{schmidt2026nonlinear} implies the existence of a predictable process $S$ such that $M_t= (S*\tilde{\mu})_t$ on $\{\tau >t\}$ ($I_t = \mathbf{1}_{\tau \le t}=0$ and the filtration is generated only by the observation $Y$ and the survival information). Finally, identifying the jump part and follow the same analysis as in \cite{schmidt2026nonlinear} gives,
$$S:= S(\varphi)(t,y)=E^\mathbb Q[\Delta \varphi(X_t) |\ccH_{t^-},\Delta Y_{t}=y].$$
Substituting this representation into the previous equation proves \eqref{eq:ks_credit}.
\end{proof}
\begin{remark}
    The filtering equation \eqref{eq:ks_credit} is the continuous-state analogue of the discrete recursive filter obtained by \cite{frey2009pricing} through Markov chain approximation methods which basically is discretizing both time and the state space, then recursively computing
    \begin{align*}
    q_j(k) = \mathbb Q(V_k = m_j(k) | \ccH_k), \quad j=1,\ldots,|M|
\end{align*}
where $\{m_1(k), \ldots m_{|M|}(k)\}$ is a finite grid and $V_k$ is a discrete-time approximation. In the present framework, the innovation term $(S(\varphi) * \tilde \mu)_t$ captures the Bayesian update generated by predictable information releases at the announcement dates $(T_n)$. 
\end{remark}
\subsubsection{Economic Interpretation of the Filtering Equation}
The Kushner-Stratonovich equation \eqref{eq:ks_credit} separates naturally into three information channels.
\paragraph{Absence of default}
Between disclosure dates, the survival of the firm is informative because firms with lower asset values are more likely to default. Consequently, conditional on no default, the posterior distribution gradually shifts toward stronger financial states. This continuous learning mechanism coincides with that in partially observed structural credit-risk models such as Frey and Schmidt \cite{frey2009pricing}.

\paragraph{Predictable disclosures}
At each scheduled announcement time, investors incorporate newly released information through a Bayesian update driven by the disclosure mark. Unlike classical structural models with totally inaccessible jump times, the timing of these information arrivals is known in advance. As a result, market participants anticipate periods of heightened information risk before announcements, while uncertainty is resolved immediately after the disclosure.

\paragraph{Innovation process}
The innovation term measures the discrepancy between the realized disclosure and its conditional prediction under the current posterior distribution. Large innovations induce substantial revisions of investors' beliefs, whereas small innovations produce only moderate updates. This mechanism provides the dynamic link between public disclosures and market valuations under incomplete information.

\subsection{Connection to Pricing under Incomplete Information}

The full information valuation formulas of Section \ref{sec: full information} immediately yield their partial-information counterparts by conditioning on the investor filtration.
The market value of equity therefore becomes 
\begin{align*}
    S_t^{\ccH} = \mathbf{1}_{\tau > t} \, E^\mathbb Q[S(t, V_t, K_t, d_t,Y_t) | \ccH_t].
\end{align*}
Since $K_t$, $d_t$ and $Y_t$ are $\ccH_t$-measurable, we obtain
\begin{align}\label{equity under incomplete information}
    S_t^{\ccH} = \mathbf{1}_{\tau > t} \int_{\mathbb R_{\ge 0}} S(t, v, K_t, d_t,Y_t) \, \pi_t(dv),
\end{align}
Similarly, the conditional survival probability satisfies
\begin{align*}
    \mathbb Q(\tau > T | \ccH_t) = \mathbf{1}_{\tau > t} \int_{\mathbb R_{\ge 0}} p(t, T, v, K_t, d_t,Y_t) \, \pi_t(dv),
\end{align*}
where $p(t,T,v,K,d,y)=\mathbb Q(\tau >T \mid V_t=v, K_t=K, d_t=d,Y_t=y)$ denotes the full information survival probability introduced in Section \ref{sec: full information}. Consequently, defaultable bond prices under incomplete information admit the representation
\begin{align*}
    B^\delta(t, T) = \mathbf{1}_{\{\tau > t\}}e^{-r(T-t)} \int_{\mathbb R_{\ge 0}}  \left[\delta + (1-\delta) p(t, T, v, K_t, d_t,Y_t)\right] \, \pi_t(dv).
\end{align*}
Likewise, the fair CDS spread becomes:
\begin{align*}
    x^*_{\ccH} = \dfrac{\displaystyle \ell \int_t^{t_N} e^{-r(s-t)} \int_{\mathbb R_{\ge 0}} \mathbb Q(\tau \in ds\mid V_t=v, \ccF_t) \pi_t(dv)}{\displaystyle  \sum_{n:t_n>t} e^{-r(t_n - t)}\Delta \int_{\mathbb R_{\ge 0}} p(t, t_n, v, K_t, d_t,Y_t) \pi_t(dv)},
\end{align*}

Hence, the filtering problem of determining the conditional distribution $\pi_t$ becomes central to the valuation of all corporate securities under incomplete information.

\subsubsection{Particle Filter Approximation}
The Kushner-Stratonovich equation is an infinite-dimensional stochastic evolution equation and generally admits no closed-form solution. We therefore approximate the 
posterior distribution by a sequential Monte Carlo (particle filter) method. Specifically,
$$\pi_t^N=\dfrac{1}{N}\sum_{i=1}^N w_t^{(i)} \delta_{V_t^{(i)}},$$
where $V_t^{(i)}$ denotes the $i$th particle and $\{w_t^{(i)}\}_{i=1}^N$ are normalized weights satisfying $\sum_{i=1}^N w_t^{(i)}=1$. The algorithm alternates between propagation, Bayesian updating at scheduled disclosure dates, and resampling whenever particle degeneracy becomes significant. The following conditions are imposed throughout this subsection.

\begin{assumption}\label{filter condition}
    \begin{enumerate}
        \item The functions $G(v,y,\xi)$ and $f(v,y)$ are globally Lipschitz continuous in $v$.
        \item There exists $p\ge 4$ such that  $$\sup_{n\ge 1} E^\mathbb Q[|G(V_{T_n^-},Y_{T_n^-},\xi_n)|^p]<\infty,\quad \sup_{n\ge 1} E^\mathbb Q[|f(V_{T_n^-},Y_{T_n^-})+\eta_n|^p]<\infty.$$
        \item The observation noise admits a density $p_\eta$ satisfying for some constants $C_\eta$,
        $$\sup_{\eta \in \mathbb R^p}p_\eta(\eta)\le C_\eta<\infty.$$
        \item The conditional dividend density satisfies $\sup_{v,y}\nu_d(d_m|d_{m-1},v,y)\le C_d < \infty,$
        for some constant $C_d>0$, for every announcement date.
        \item The initial condition satisfies $E^\mathbb Q[|V_0|^p]<\infty$.
    \end{enumerate}
\end{assumption}

\begin{proposition}\label{filter convergence}
    Let Assumption \ref{filter condition} hold. For every $\varphi \in C_b^{2}(\mathbb R)$ and every fixed $t \ge 0$, the particle approximation satisfies
    \begin{align*}
        \pi_t^N(\varphi) \longrightarrow \pi_t(\varphi) \quad \text{a.s. as}\, N \to \infty, \, \forall t\ge 0.
    \end{align*}
\end{proposition}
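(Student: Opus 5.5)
We prove almost sure convergence of the particle filter by reducing it to a discrete-time Feynman--Kac (sequential Monte Carlo) convergence argument. The key observation is that, between consecutive disclosure dates $T_{n-1}$ and $T_n$, the observation $Y$ and the dividend process $D$ are constant. The only information arriving is survival. Hence the exact filter evolves by three operations. The first is propagation of $\pi_{T_{n-1}}$ through the geometric Brownian motion kernel $Q_{n}$, killed at the barrier $K_{T_{n-1}}$; this is the \emph{absence of default} channel of Theorem \ref{temp616}, and it amounts to reweighting by the conditional survival probability. The second is application of the predictable jump map $v\mapsto v+G(v,y,\xi)-\kappa d_n$, which is the $\ccA$-term. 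The third is Bayesian reweighting by the likelihood
\begin{align*}
g_n(v):=p_\eta\big(\Delta Y_{T_n}-f(v,Y_{T_n^-})\big)\,\nu_d(d_n\mid d_{n-1},v,Y_{T_n^-})\,\mathbf 1_{\{v+G(v,\cdot)-\kappa d_n>K_{T_n}\}},
\end{align*}
which is the innovation term $S(\varphi)*\tilde\mu$. I would first state this recursion, namely $\pi_{T_n}(\varphi)=\pi_{T_n^-}(g_n\, \Phi_n\varphi)/\pi_{T_n^-}(g_n)$ with $\Phi_n$ the jump-averaging kernel, and check that it is consistent with \eqref{eq:ks_credit}. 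For $t\in(T_{n-1},T_n)$, the filter $\pi_t$ is likewise the normalized survival-weighted push-forward of $\pi_{T_{n-1}}$. Since $A$ has finitely many jumps on $[0,t]$ by Assumption \ref{ass:standing}(iv), only finitely many recursion steps are relevant for fixed $t$.

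Next I would prove convergence by induction over the finitely many steps, in the standard Crisan--Doucet form, using $L^p$ bounds with $p\ge4$. The inductive hypothesis is that for every bounded $\varphi$,
\begin{align*}
E\big[|\pi^N_{s}(\varphi)-\pi_{s}(\varphi)|^4\big]\le C_s\|\varphi\|_\infty^4N^{-2},
\end{align*}
at the previous step $s$. Each of the three operations must preserve such a bound with a new constant.
\begin{itemize}
\item Mutation by sampling i.i.d.\ particles from the kernel adds a conditionally independent centered error. Its fourth moment is $O(N^{-2})$ by the Marcinkiewicz--Zygmund inequality.
\item Multinomial resampling adds an error of the same order.
\item Weighting is handled through the identity
\begin{align*}
\frac{\pi^N(g\varphi)}{\pi^N(g)}-\frac{\pi(g\varphi)}{\pi(g)}=\frac{\pi^N(g\varphi)-\pi(g\varphi)}{\pi^N(g)}+\frac{\pi(g\varphi)\,\big(\pi(g)-\pi^N(g)\big)}{\pi^N(g)\,\pi(g)}.
\end{align*}
This requires $g_n$ bounded, which follows from Assumption \ref{filter condition}(3)--(4), and the normalizing constant $\pi(g_n)$ to be strictly positive.
\end{itemize}
The Lipschitz and moment conditions in Assumption \ref{filter condition}(1),(2),(5) serve to make the jump map measurable and integrable. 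They also allow passing from test functions in $C_b^2$ to the survival-weighted, jump-composed functions appearing in the recursion, which remain bounded but need not be smooth. For this reason I would run the whole induction for bounded measurable test functions. Once the $L^4$ bound at time $t$ is established, Markov's inequality gives $\mathbb Q(|\pi^N_t(\varphi)-\pi_t(\varphi)|>\varepsilon)\le C\varepsilon^{-4}N^{-2}$. This is summable in $N$, so Borel--Cantelli yields almost sure convergence.

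The main obstacle is the normalization. The constant $\pi_{T_n^-}(g_n)$ is positive only almost surely as an observed quantity; it is the predictive density of the realized disclosure. Before default, the survival probability under $\pi$ is also positive but has no uniform lower bound. The $1/\pi^N(g)$ factor therefore needs care. I would handle it by conditioning on the observation path, for which $\pi(g_n)>0$ is a fixed positive number (positivity holds $\mathbb Q$-a.s. since the realized data have positive predictive density). I would then work on the event $\{\pi^N(g_n)>\pi(g_n)/2\}$, whose complement has probability $O(N^{-2})$ by the inductive bound, and bound the ratio on the complement trivially using $|\pi^N(\varphi)|\le\|\varphi\|_\infty$. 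The constants $C_s$ then depend on the observation path, which is harmless for pathwise almost sure convergence at fixed $t$.

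A secondary difficulty is the continuous-time killing between announcement dates. The survival weight is the probability that a Brownian bridge started from each particle stays above a constant barrier. Either it is computed exactly by the closed-form first-passage probability for geometric Brownian motion, so that no discretization error arises, or it is simulated directly, in which case it again enters as a bounded weight in $[0,1]$ and the same argument applies.
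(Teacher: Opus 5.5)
Your proposal is correct and follows the same route as the paper's appendix proof (Feynman--Kac reduction with finitely many predictable updates on $[0,t]$, potentials bounded by $C_\eta C_d$ via Assumption~\ref{filter condition}(3)--(4), then standard Crisan--Doucet/Del Moral convergence); the one difference is that you carry out the $L^4$ induction and Borel--Cantelli step for the normalized filter yourself, controlling $1/\pi^N(g_n)$ on a good event, whereas the paper cites a.s.\ convergence of the unnormalized filter and passes to the ratio by continuous mapping. One slip to fix: the survival indicator in $g_n$ depends on the mark $\xi$, so the update at $T_n$ must be written on the extended state $(v,\xi)$ (mutate, then weight), i.e.\ $\pi_{T_n}(\varphi)\propto\int\!\int g_n(v,\xi)\,\varphi\big(v+G(v,Y_{T_n^-},\xi)-\kappa d_n\big)\,F_\xi(d\xi)\,\pi_{T_n^-}(dv)$ rather than $\pi_{T_n^-}(g_n\Phi_n\varphi)/\pi_{T_n^-}(g_n)$, after which your estimates go through unchanged.
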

The proof is based on standard convergence results for sequential Monte Carlo methods applied to jump-diffusion filtering problems and  is given to Appendix \ref{Appendix A}. 

The full-information and partial information pricing frameworks are linked through the posterior distribution generated by the nonlinear filter. Under complete information, security values depend directly on the latent state variables $(V_t, K_t, d_t, Y_t)$. Under partial information, these state variables are replaced by their conditional distribution given the investor filtration. Consequently, all valuation formulas derived in Section \ref{sec: full information} remain valid after taking conditional expectations with respect to the filter. 

The particle approximation provides an efficient numerical implementation of this procedure by recursively updating the posterior distribution as new disclosures arrive. Since the latent asset value cannot be perfectly observed, the same filtered distribution also forms the basis for the quadratic hedging strategies developed in Section \ref{sec 5: Hedging}, thereby providing a unified framework for pricing and risk management under incomplete information.

\section{Hedging Under Incomplete Information With Predictable Jumps}\label{sec 5: Hedging}
We now study hedging of default-sensitive claims under the investor filtration $\mathbb{H}$. Since the firm's asset value is not directly observable and disclosure shocks are revealed only at scheduled announcement dates, the market is incomplete and perfect replication is generally impossible. We therefore adopt the local risk-minimization framework of \cite{schweizer1991option,moller2001risk} extending the predictable jump setting of \cite{fontana2024term} to the present structural credit risk models.

\subsection{Hedging Framework and Market Incompleteness }
We work in the incomplete market setting introduced in Section \ref{sec : the incomplete market}, where market incompleteness arises from partial information. Trading takes place under the restricted investor filtration $\ccH =(\ccH_t)_{t \ge 0}$, generated by observable dividend payments, noisy public announcements, and default information, but excluding the true firm value process $V$. All quantities are expressed in discounted units. The discounted asset price is $\tilde S_t^\ccH = e^{-rt}S_t^\ccH,$ and the money market account is taken as num\'eraire. In particular, portfolio and trading strategies are formulated directly in discounted terms.

\begin{definition}\label{def:trading strategy}
    A trading strategy is a pair $\varphi = (\zeta,\phi)$ where $\zeta = (\zeta_t)_{t \ge 0}$ is an $\mathbb{H}$-predictable process, $\phi = (\phi)_{t \ge 0}$ is adapted and the discounted wealth process is $$\tilde V_t (\varphi):=e^{-rt}(\zeta_t S_t^\ccH+\phi_t ),$$ and has right-continuous paths satisfying  $V_t (\varphi) \in L^2(\ccH_t,\mathbb{Q})$ for all $t \in [0,T].$
\end{definition}
 We denote by $\Theta(\ccH)$ the class of  $\ccH$-admissible trading strategies. For any claim $H$, we write $\tilde H:= e^{-rT}H.$
\begin{definition}
    Let $H\in L^2(\ccH_T,\mathbb{Q})$ be a contingent claim. A  strategy $\varphi \in \Theta(\ccH)$ is \emph{admissible} for $H$ if 
    $\tilde V_T(\varphi) = \tilde H \quad \mathbb{Q}\text{-a.s.}.$
\end{definition}

\begin{definition}\label{Mean self-financing}
    An admissible strategy $\varphi = (\zeta,\phi) \in \Theta(\ccH)$ is said to be \emph{mean self-financing} if its costs process
    $$C_t(\varphi) := \tilde V_t (\varphi)-\int_0^t \zeta_u d\tilde S_u^\ccH \qquad \text{for all}\quad t \in [0,T]$$
    is an $(\mathbb{H},\mathbb{Q})$-martingale.
\end{definition}
\subsection{Local Risk-minimization and Galtchouk-Kunita-Watanabe Decomposition}
We adopt the local risk-minimization criterion of \cite{schweizer1991option}, which minimizes the conditional variance of future hedging costs.
\begin{definition}\label{LRM}
Let $H\in L^2(\ccH_T,\mathbb{Q})$. A mean self-financing strategy $\varphi^\ast \in \Theta(\ccH)$ is  \emph{locally risk-minimizing} if
    \begin{enumerate}
        \item[i)] $\tilde{V}_T(\varphi^\ast)=\tilde H_T$ a.s.,
        \item[ii)] for all $t \in [0,T]$ and all $\varphi \in \Theta(\ccH)$
        $$E^\mathbb{Q}\big[ (C_T(\varphi^*)-C_t(\varphi^*))^2|\ccH_t\big]\le E^\mathbb{Q}\big[ (C_T(\varphi)-C_t(\varphi))^2|\ccH_t\big]$$
    \end{enumerate}
\end{definition}

\subsubsection{Galtchouk-Kunita-Watanabe decomposition with predictable Jump}
The characterization of locally risk-minimizing strategies relies on  is the Galtchouk-Kunita-Watanabe (GKW) decomposition of the discounted claim with respect to the traded asset.

\begin{definition}
    Two $\mathbb{H}$-local martingales $M=(M_t)_{t \ge 0}$ and $N=(N_t)_{t \ge 0}$ are called \emph{strongly orthogonal}, written as $M \perp N$ if their product $MN$ is a $\mathbb{H}$-local martingale. Equivalently, $\langle M, N\rangle_t =0$ for all $t\ge 0$, where $\langle \cdot,\cdot \rangle$ denotes the predictable quadratic covariation.
\end{definition}
Under the investor filtration, the discounted equity price $\tilde S^\ccH$ is a square-integrable semmimartingale whose dynamics are driven by the innovation process associated with the nonlinear filter. Its canonical decomposition consists of a continuous martingale part together with jumps occurring only at the predictable announcement times: 
\begin{align*}
    \tilde S_t^\ccH = \tilde S_0^\ccH +  \tilde S_{t}^{\ccH,c} + \sum_{T_n \le t} \Delta\tilde S_{T_n}^\ccH.
\end{align*}
The continuous component is generated by the innovation $\mathbb{H}$-Brownian motion, whereas the jump component reflects Bayesian updates at scheduled disclosures.
We emphasize that Definition \ref{LRM} is equivalent to the requirement that the associated cost process $C=(C_t)_{t\ge 0}$ satisfies
$$\text{$C$ is a square-integrable $\mathbb{H}$-martingale strongly orthogonal to $\tilde S^\ccH$}.$$ 

\begin{assumption} \label{assumption 2}
    The discounted price $\tilde S^\ccH$ is a square-integrable $\mathbb{H}$-martingale, whose jumps occur only at predictable announcement times and satisfies $$E^\mathbb{Q} \big[\sum_{T_n \le t} (\Delta\tilde S_{T_n}^\ccH)^2\big]<\infty$$
\end{assumption}

\begin{lemma}
    Under the investor filtration $\mathbb{H}$, every square-integrable martingale in the subspace generated by predictable announcement jumps admits the decomposition
    $$M_t = M_0 + M_t^c + \sum_{T_n \le t}\Delta M_{T_n},$$
    where $M^c$ is the continuous martingale part and the jumps occur only at the predictable announcement times $(T_n)$ and at the default time $\tau$.
\end{lemma}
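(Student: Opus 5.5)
We would derive the lemma from the general structure of square-integrable martingales: every such martingale splits uniquely into a continuous martingale part and a purely discontinuous one, and the purely discontinuous part is determined by its jumps. So we take $M$ a square-integrable $\mathbb H$-martingale and write $M = M_0 + M^c + M^d$, where $M^c$ is the continuous local martingale part and $M^d$ is purely discontinuous, with $\Delta M^d = \Delta M$ (Jacod--Shiryaev, Theorem I.4.18). The task is then to locate the jumps of $M$ and to show that $M^d$ is the sum of its jumps, with no compensator correction.

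\textbf{Step 1: where the jumps occur.} The filtration $\mathbb H$ is generated by $I$, $D$ and $Y$. Both $D$ and $Y$ are piecewise constant and change only at the $T_n$. The indicator $I$ jumps only at $\tau$. By the martingale representation theorem for integer-valued random measures invoked in the proof of Theorem \ref{temp616} ([Proposition 4.1]{schmidt2026nonlinear}), applied to the marked point process $\mu$ together with the default indicator, $M$ admits the representation
\[
M_t = M_0 + M^c_t + (S*\tilde\mu)_t + \int_0^t h_s \, d\big(I_s - \Lambda_s\big).
\]
Here $\Lambda$ is the $\mathbb H$-compensator of $I$, $S$ is $\tilde{\ccP}$-measurable and $h$ is predictable. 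Hence $\{\Delta M \neq 0\} \subset \bigcup_n [\![T_n]\!] \cup [\![\tau]\!]$, so the jumps occur only at the announcement dates and at $\tau$. For martingales in the subspace generated by announcement jumps, the last integral vanishes. In general, $\tau$ contributes only its own jump.

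\textbf{Step 2: the jump part has no drift.} For the term $S*\tilde\mu$ we use that $\nu(dt,dy)=\sum_k \delta_{T_k}(dt)F^k(dy)$ is carried by the predictable times $T_k$. Therefore
\[
S*\tilde\mu_t = \sum_{T_k\le t}\Big(S(T_k,\Delta Y_{T_k}) - \int S(T_k,y)F^k(dy)\Big),
\]
which is a pure sum of jumps at the $T_k$. At a predictable time $T$ the martingale property gives $E[\Delta M_T\mid\ccH_{T-}]=0$ ([Lemma I.2.27]{}), which is consistent with this formula. Since $T_n$ is locally finite by Assumption \ref{ass:standing}(iv), and $\sum E[(\Delta M_{T_n})^2]\le E[M_t^2]<\infty$, the sum $\sum_{T_n\le t}\Delta M_{T_n}$ converges in $L^2$. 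It is a martingale because each summand is a martingale increment at a predictable time. At $\tau$, the compensated term $\int h\, d(I-\Lambda)$ is the only part of $M$ not expressible as a bare sum of jumps.

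\textbf{Main obstacle.} The main difficulty is the continuity of $\Lambda$. The displayed decomposition, with jumps only at $T_n$ and $\tau$, requires that the compensated default term have no drift that jumps at the $T_n$. Jump-induced default happens at the predictable times $T_n$, so $\Lambda$ has atoms there, namely $\Delta\Lambda_{T_n}=\mathbb Q(\tau=T_n\mid\ccH_{T_n-})$. We would handle this by absorbing these atoms into the jumps $\Delta M_{T_n}$, which is legitimate because they sit at the same predictable times. The continuous part of $\Lambda$ would be attributed to $M^c$ together with the innovation part, interpreting "continuous martingale part" to include this compensated continuous drift (the diffusive default intensity). Care is needed here, since $-\int h\,d\Lambda^c$ is continuous but is not itself a martingale. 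The correct grouping is therefore jumps at $T_n$, plus the single jump at $\tau$, plus a continuous remainder, which is a martingale by difference.
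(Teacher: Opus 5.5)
The paper states this lemma without proof, so there is no argument to compare with. On its own merits, your Steps 1--2 do give the display for martingales with no off-grid default component. Your final paragraph, however, is wrong, and it rests on a misreading of $M^c$.

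The filtration $\mathbb H$ is generated by the pure-jump processes $I$, $D$, $Y$, that is, by a marked point process. By Jacod's representation theorem, which is the result you cite and which has no Brownian term, every $\mathbb H$-local martingale is purely discontinuous. Hence $M^c\equiv 0$, and $\mathbb H$ carries no innovation Brownian motion, despite the paper's remark to that effect in the hedging section. Two consequences follow.
\begin{itemize}
\item The term $-\int h\,d\Lambda^c$ is continuous and of finite variation. A continuous local martingale of finite variation is constant, so this term cannot be reinterpreted as part of a continuous martingale part.
\item Your ``martingale by difference'' claim fails for the same reason. On $\{\tau\notin\{T_n\}\}$ the time $\tau$ is totally inaccessible, and $h_\tau\mathbf 1_{\{\tau\le t,\ \tau\notin\{T_n\}\}}$ is not a martingale unless $h\,d\Lambda^c=0$. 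Its compensator is exactly $\int_0^{t\wedge\tau}h_s\,d\Lambda^c_s$. Removing this jump term and the genuine martingale $\sum_{T_n\le t}\Delta M_{T_n}$ from $M-M_0$ leaves $-\int_0^{t\wedge\tau}h_s\,d\Lambda^c_s$, which is not a martingale.
\end{itemize}

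What is actually true is that every square-integrable $\mathbb H$-martingale satisfies
\[
M_t=M_0+\sum_{T_n\le t}\Delta M_{T_n}+\int_{(0,t]\setminus\bigcup_n\{T_n\}} h_s\,d(I_s-\Lambda_s).
\]
The lemma's display therefore holds, with $M^c=0$, exactly when the last term vanishes, that is, when $M$ has no jump at $\tau$ off the announcement grid. Jump-induced defaults at the $T_n$ and the atoms $\Delta\Lambda_{T_n}$ are legitimately absorbed into $\Delta M_{T_n}$, as you say. The tension in the lemma's wording must be resolved by this restriction, not by redefining $M^c$.

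Two smaller points, neither of which changes where the jumps are located:
\begin{itemize}
\item Step 1 also needs that $\Lambda$ has no atoms between announcement dates. This holds because, on survival, the posterior sits strictly above the constant barrier, and GBM first-passage times are diffuse.
\item The additive form $S*\tilde\mu+\int h\,d(I-\Lambda)$ is not the general representation. The processes $I$, $D$ and $Y$ can jump simultaneously at $T_n$, and $\mu$ omits the dividend marks. Jacod's theorem has to be applied to the joint marked point process, with an integrand that depends on all marks jointly.
\end{itemize}
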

In the present section, we focus on claims whose hedgeable risk is generated by the predictable announcement jumps.

\begin{theorem}\label{GKW Decomposition}
    Let Assumption \ref{assumption 2} holds and let $H_T \in L^2(\ccH_T, \mathbb{Q})$. Define
    $$\tilde H_t := E^\mathbb{Q}[e^{-rT}H_T| \ccH_t], \quad t \in [0,T].$$
    Then there exists a unique decomposition
    \begin{align}\label{GKW equation}
        \tilde H_t = \tilde H_0 + \sum_{T_n \le t} \zeta^j_{T_n}\Delta\tilde S_{T_n}^\ccH+L_t, \quad t \in [0,T],
    \end{align}
    where 
     $L=(L_t)_{t\ge 0}$ is an $\mathbb{H}$-martingale with $L_0=0$, strongly orthogonal to $\tilde S^\ccH$, and $\zeta^j = (\zeta^j_{T_n})_{n \ge 1}$ is a sequence of $\ccH_{T^-_n}$-measurable random variable with $\sum_{T_n \le t} (\zeta^j_{T_n})^2(\Delta\tilde S_{T_n}^\ccH)^2<\infty$ a.s.
\end{theorem}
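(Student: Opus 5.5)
The plan is to show first that, under $\mathbb H$, the discounted price $\tilde S^{\ccH}$ has no continuous martingale part. Once that is established, the whole Galtchouk--Kunita--Watanabe problem reduces to a sequence of one-period $L^2$-projections at the predictable announcement dates $T_n$.

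\textbf{Step 1 (no continuous martingale part).} The filtration $\mathbb H$ in \eqref{investor filtration} is generated by three things: the piecewise constant signal $Y$, the dividend process $D$ (which jumps only at the $T_n$), and the default indicator $I$. All three are integer-valued random-measure data: $\mu$ together with the dividend marks at the $T_n$, and the single jump of $I$ at $\tau$. I would invoke the martingale representation for integer-valued random measures \cite[Proposition 4.1]{schmidt2026nonlinear}, applied to the joint marked point process of $(\Delta Y_{T_n}, d_n)$ at the $T_n$ and the default point $\tau$. This gives that every square-integrable $\mathbb H$-martingale is a stochastic integral against these compensated measures, and is therefore purely discontinuous. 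Hence $\tilde S^{\ccH,c}\equiv 0$. By Assumption \ref{assumption 2}, $\tilde S^{\ccH}$ then has the form $\tilde S^{\ccH}_t=\tilde S^{\ccH}_0+\sum_{T_n\le t}\Delta \tilde S^{\ccH}_{T_n}$. I expect this step to be the main obstacle: the representation must be justified for the enlarged filtration including default. The cleanest route would be to work on $\{\tau>t\}$ as in the proof of Theorem \ref{temp616} and treat $\tau$ as one additional marked point.

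\textbf{Step 2 (construction).} Set
\[
\zeta^j_{T_n}:=\frac{E^{\mathbb Q}\big[\Delta\tilde H_{T_n}\Delta\tilde S^{\ccH}_{T_n}\,\big|\,\ccH_{T_n^-}\big]}{E^{\mathbb Q}\big[(\Delta\tilde S^{\ccH}_{T_n})^2\,\big|\,\ccH_{T_n^-}\big]},
\]
with the convention $0/0:=0$; this is $\ccH_{T_n^-}$-measurable. Conditional Cauchy--Schwarz gives
\[
(\zeta^j_{T_n})^2E^{\mathbb Q}\big[(\Delta\tilde S^{\ccH}_{T_n})^2\,\big|\,\ccH_{T_n^-}\big]\le E^{\mathbb Q}\big[(\Delta\tilde H_{T_n})^2\,\big|\,\ccH_{T_n^-}\big].
\]
Summing over $n$ yields
\[
E^{\mathbb Q}\Big[\sum_{T_n\le t}(\zeta^j_{T_n})^2(\Delta\tilde S^{\ccH}_{T_n})^2\Big]\le E^{\mathbb Q}\big[\tilde H_T^2\big]<\infty,
\]
which in particular gives the stated almost-sure finiteness. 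Since each $T_n$ is predictable and $\tilde S^{\ccH}$ is a martingale, $E^{\mathbb Q}[\Delta\tilde S^{\ccH}_{T_n}\mid\ccH_{T_n^-}]=0$. Therefore $N_t:=\sum_{T_n\le t}\zeta^j_{T_n}\Delta\tilde S^{\ccH}_{T_n}$ is a square-integrable $\mathbb H$-martingale; local finiteness of the sum follows from Assumption \ref{ass:standing}(iv). Define $L:=\tilde H-\tilde H_0-N$, which is a square-integrable $\mathbb H$-martingale with $L_0=0$.

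\textbf{Step 3 (orthogonality and uniqueness).} By Step 1, both $L$ and $\tilde S^{\ccH}$ are purely discontinuous, and $\tilde S^{\ccH}$ jumps only at the $T_n$. Hence
\[
[L,\tilde S^{\ccH}]_t=\sum_{T_n\le t}\Delta L_{T_n}\Delta\tilde S^{\ccH}_{T_n};
\]
jumps of $L$ at $\tau$ or elsewhere do not contribute. Its compensator has jumps
\[
E^{\mathbb Q}\big[\Delta L_{T_n}\Delta\tilde S^{\ccH}_{T_n}\,\big|\,\ccH_{T_n^-}\big]=E^{\mathbb Q}\big[\Delta\tilde H_{T_n}\Delta\tilde S^{\ccH}_{T_n}\,\big|\,\ccH_{T_n^-}\big]-\zeta^j_{T_n}E^{\mathbb Q}\big[(\Delta\tilde S^{\ccH}_{T_n})^2\,\big|\,\ccH_{T_n^-}\big]=0,
\]
where the last equality holds by the choice of $\zeta^j_{T_n}$; on the set where the denominator vanishes, $\Delta\tilde S^{\ccH}_{T_n}=0$ a.s., so both terms are zero. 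Thus $\langle L,\tilde S^{\ccH}\rangle\equiv 0$, i.e.\ $L\perp\tilde S^{\ccH}$.

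For uniqueness, suppose $(\zeta',L')$ is a second decomposition. Then $\sum_{T_n\le\cdot}(\zeta^j_{T_n}-\zeta'_{T_n})\Delta\tilde S^{\ccH}_{T_n}=L'-L$ is strongly orthogonal to $\tilde S^{\ccH}$. Taking its predictable bracket with itself, which is a sum of $\zeta$-weighted brackets with $\tilde S^{\ccH}$, gives
\[
E^{\mathbb Q}\Big[\sum_{T_n\le T}(\zeta^j_{T_n}-\zeta'_{T_n})^2(\Delta\tilde S^{\ccH}_{T_n})^2\Big]=0.
\]
Hence $\zeta^j_{T_n}\Delta\tilde S^{\ccH}_{T_n}=\zeta'_{T_n}\Delta\tilde S^{\ccH}_{T_n}$ a.s. for every $n$, and consequently $L=L'$.
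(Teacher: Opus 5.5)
Your proposal is correct and follows the same route as the paper. In both, $\zeta^j_{T_n}$ is the conditional $L^2$-projection coefficient of $\Delta\tilde H_{T_n}$ on $\Delta\tilde S^{\ccH}_{T_n}$ given $\ccH_{T_n^-}$, orthogonality comes from $E^{\mathbb Q}[\Delta L_{T_n}\Delta\tilde S^{\ccH}_{T_n}\mid\ccH_{T_n^-}]=0$, and uniqueness comes from the predictable bracket. Your version is in fact more careful than the paper's proof: you define $L:=\tilde H-\tilde H_0-N$ rather than only the sum of residual jumps at the $T_n$, so the decomposition actually reproduces $\tilde H$. Your Step 1 supplies the absence of a continuous martingale part of $\tilde S^{\ccH}$ under the pure-jump filtration $\mathbb H$, which the bracket computation tacitly needs. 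You also verify the summability condition via conditional Cauchy--Schwarz, and you state uniqueness in the correct form, $\zeta^j_{T_n}\Delta\tilde S^{\ccH}_{T_n}=\zeta'_{T_n}\Delta\tilde S^{\ccH}_{T_n}$ a.s., since $\zeta^j_{T_n}$ is not determined where the conditional variance vanishes.
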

\begin{proof}
    Since $\tilde H$ is a square-integrable $\mathbb{H}$-martingale, its jump component at the predictable announcement times admits the representation can be written as
    $$\tilde H_t = \tilde H_0 + \sum_{T_n \le t} \Delta\tilde H_{T_n}+L_t,$$
    where $L$ contains all martingale orthogonal to the predictable jump space generated by $\tilde S^\ccH$.

    At each predictable time $T_n$, 
    define the conditional $L^2$-projection coefficient
    \begin{align}\label{jump integrand}
        \zeta^j_{T_n} = \dfrac{E^\mathbb{Q}[\Delta \tilde H_{T_n}\Delta \tilde S_{T_n}^\ccH| \ccH_{T^-_n}]}{E^\mathbb{Q}[(\Delta \tilde S_{T_n}^\ccH)^2| \ccH_{T^-_n}]},
    \end{align}
    with the convention $\zeta_{T_n}^j=0$ if the denominator vanishes. Define the residual jump process
    $$\Delta L_{T_n}=\Delta\tilde H_{T_n}-\zeta_{T_n}^j\Delta\tilde S_{T_n}^\ccH,$$
    and set 
        $L_t = \sum_{T_n \le t}\Delta L_{T_n}, \quad t \in [0,T].$
    By construction, $E^\mathbb{Q}[\Delta L_{T_n}\Delta \tilde S_{T_n}^\ccH| \ccH_{T^-_n}]=0,$ for every $n$. Hence, 
    \begin{align*}
        \langle L, \tilde S^\ccH \rangle_t =\sum_{T_n \le t}E^\mathbb{Q}[\Delta L_{T_n}\Delta \tilde S_{T_n}^\ccH| \ccH_{T^-_n}]=0, 
    \end{align*}
    which implies the strong orthogonality $L \perp \tilde S^\ccH.$ Substituting the definition of $\Delta L_{T_n}$ yields
    $$\tilde H_t = \tilde H_0 + \sum_{T_n \le t} \zeta^j_{T_n} \Delta\tilde S_{T_n}+L_t,$$

    To prove uniqueness, suppose that two decompositions of the stated form exist. Subtracting them yields
    \begin{align*}
        0= \sum_{T_n \le t} (\zeta^j_{T_n}-\tilde\zeta^j_{T_n})\Delta \tilde S_{T_n}^\ccH+( L_t - \tilde L_t).
    \end{align*}
    Taking the predictable quadratic variation and using orthogonality gives:
    \begin{align*}
        \sum_{T_n \le t} (\zeta^j_{T_n}-\tilde\zeta^j_{T_n})^2(\Delta \tilde S_{T_n}^\ccH)^2+\langle L - \tilde L\rangle_t =0
    \end{align*}
    Thus, $\zeta_{T_n}^j = \tilde \zeta_{T_n}^j$ for all $n$ and $L=\tilde L$ a.s.
\end{proof}

\subsubsection{Locally risk-minimizing Strategy}
We show that the GKW decomposition \eqref{GKW equation} directly yields the locally risk-minimizing strategy.
\begin{theorem}\label{unique GKW}    Let $H_T \in L^2(\ccH_T,\mathbb{Q})$ and suppose its discounted value process admits the GKW decomposition \eqref{GKW equation}. Define the strategy $\varphi^* =(\zeta^*,\phi^*)$ by 
    \begin{align}
       \label{unique 1}  &\zeta^*_t= \zeta_{T_n}^j\, \text{if $t \in [T_n,T_{n+1})$},\quad \zeta^*_0= \zeta_{T_0}^j \\
      \label{unique 2}  &\phi_t^* := e^{rt}(\tilde H_t - \zeta_t^* \tilde S_t^\ccH).
    \end{align}
    Then $\varphi^* \in \Theta(\ccH)$ is the unique locally risk-minimizing strategy for $H_T$ in the sense of Definition \ref{LRM}. 
\end{theorem}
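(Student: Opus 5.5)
The plan is to reduce local risk-minimization to the characterization recalled just before Assumption \ref{assumption 2}. A mean self-financing strategy $\varphi$ with $\tilde V_T(\varphi)=\tilde H$ is locally risk-minimizing if and only if its cost process $C(\varphi)$ is a square-integrable $\mathbb H$-martingale strongly orthogonal to $\tilde S^\ccH$ (Schweizer's criterion, applicable since $\tilde S^\ccH$ is a square-integrable $\mathbb H$-martingale by Assumption \ref{assumption 2}, so there is no drift term). It therefore suffices to check three things for $\varphi^*$: admissibility and membership in $\Theta(\ccH)$; that $C(\varphi^*)$ coincides with $\tilde H_0+L$, where $L$ is the orthogonal part in \eqref{GKW equation}; and uniqueness.

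\textbf{Admissibility.} By \eqref{unique 2}, $\tilde V_t(\varphi^*)=e^{-rt}(\zeta^*_tS^\ccH_t+\phi^*_t)=\tilde H_t$. Hence $\tilde V_T(\varphi^*)=\tilde H_T=e^{-rT}H_T$. The process $\tilde H$ is a càdlàg $L^2$ martingale, so $\tilde V(\varphi^*)$ has right-continuous paths in $L^2$. The process $\phi^*$ is adapted, and $\zeta^*$ must be $\mathbb H$-predictable. This last point is delicate. As written, $\zeta^*$ is piecewise constant on $[T_n,T_{n+1})$ with value $\zeta^j_{T_n}$. I would instead read it predictably: the value used on the jump at $T_n$ is the $\ccH_{T_n^-}$-measurable $\zeta^j_{T_n}$, that is, $\zeta^*=\sum_n\zeta^j_{T_n}\mathbf 1_{(T_{n-1},T_n]}$. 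Each $T_n$ is $\mathbb H$-predictable and $\zeta^j_{T_n}\in\ccH_{T_n^-}$, so this process is predictable. It also gives $\int_0^t\zeta^*_u\,d\tilde S^\ccH_u=\sum_{T_n\le t}\zeta^j_{T_n}\Delta\tilde S^\ccH_{T_n}$, provided the continuous part of $\tilde S^\ccH$ is absorbed into the orthogonal component as in Theorem \ref{GKW Decomposition}. Square-integrability of this stochastic integral follows from the summability condition in Theorem \ref{GKW Decomposition} together with Assumption \ref{assumption 2}.

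\textbf{Cost process.} We have $C_t(\varphi^*)=\tilde H_t-\sum_{T_n\le t}\zeta^j_{T_n}\Delta\tilde S^\ccH_{T_n}=\tilde H_0+L_t$ by \eqref{GKW equation}. Since $L$ is a square-integrable $\mathbb H$-martingale strongly orthogonal to $\tilde S^\ccH$, so is $C(\varphi^*)$. This shows $\varphi^*$ is mean self-financing in the sense of Definition \ref{Mean self-financing} and locally risk-minimizing.

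\textbf{Uniqueness.} Let $\varphi=(\zeta,\phi)$ be another locally risk-minimizing strategy. Its cost $C(\varphi)$ is then a martingale orthogonal to $\tilde S^\ccH$, and $\tilde V_T(\varphi)=\tilde H_T$. Writing $\tilde H_T=C_T(\varphi)+\int_0^T\zeta\,d\tilde S^\ccH$ and conditioning on $\ccH_t$ yields a decomposition of $\tilde H$ of the form \eqref{GKW equation}. The uniqueness part of Theorem \ref{GKW Decomposition} then gives $\zeta_{T_n}=\zeta^j_{T_n}$ and $C(\varphi)-C_0(\varphi)=L$. Consequently $\tilde V_t(\varphi)=C_t(\varphi)+\int_0^t\zeta\,d\tilde S^\ccH=\tilde H_t$ for all $t$, which forces $\phi=\phi^*$.

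\textbf{Main obstacle.} I expect the main difficulty to be reconciling the stated form of $\zeta^*$ with predictability and with the continuous martingale part $\tilde S^{\ccH,c}$. I would first try to argue that $\zeta^*$ is only determined $d\langle\tilde S^\ccH\rangle$-a.e., so that uniqueness holds modulo that null set. I would also note that between announcement dates, where $\Delta\tilde S^\ccH=0$, the choice of $\zeta^*$ only affects the continuous hedge. If $\tilde S^{\ccH,c}\neq0$, a continuous integrand would have to be added, and this is excluded here by the restriction to claims whose hedgeable risk comes from announcement jumps.
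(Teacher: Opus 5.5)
Your skeleton is the paper's: $\tilde V(\varphi^*)=\tilde H$ from \eqref{unique 2}, $C(\varphi^*)=\tilde H_0+L$ from \eqref{GKW equation}, and uniqueness from the uniqueness part of Theorem \ref{GKW Decomposition}. The paper argues optimality by a heuristic conditional-variance computation on $dC_t$. You instead invoke the F\"ollmer--Sondermann/Schweizer characterization, which the paper itself records before Assumption \ref{assumption 2}. That is legitimate here because $\tilde S^\ccH$ is an $L^2$-martingale. Your remark that $\zeta$ is pinned down only $d\langle\tilde S^\ccH\rangle$-a.e.\ (i.e.\ only at the $T_n$) is also correct, and sharper than the paper's ``$\tilde\varphi=\varphi^*$ a.s.''.

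The genuine error is in your admissibility step, where you replace $\zeta^*$ by $\sum_n\zeta^j_{T_n}\mathbf 1_{(T_{n-1},T_n]}$. The facts you cite are that $T_n$ is $\mathbb H$-predictable and $\zeta^j_{T_n}$ is $\ccH_{T_n^-}$-measurable. These give predictability of $\zeta^j_{T_n}\mathbf 1_{[[T_n,\infty[[}$. Since $\ccH_{T_n^-}\subset\ccH_{T_{n+1}^-}$, they also give predictability of $\zeta^j_{T_n}\mathbf 1_{[[T_n,T_{n+1}[[}$, which is exactly the paper's \eqref{unique 1}. So the paper's $\zeta^*$ was predictable all along.

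By contrast, $\xi\mathbf 1_{]]T_{n-1},T_n]]}$ is adapted only if $\xi$ is $\ccH_{T_{n-1}}$-measurable, and $\zeta^j_{T_n}$ is not. On $\{\tau<T_n\}\in\ccH_{T_n^-}$ one has $\Delta\tilde S^\ccH_{T_n}=0$, so $\zeta^j_{T_n}=0$ by the convention in \eqref{jump integrand}. On $\{\tau\ge T_n\}$ it is generically nonzero. Holding $\zeta^j_{T_n}$ at some $t\in(T_{n-1},T_n)$ would therefore reveal whether default occurs in $(t,T_n)$. Your integrand is anticipating, lies outside $\Theta(\ccH)$, and is not the strategy the theorem is about. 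Keep the paper's $\zeta^*$: its value at $T_n$ is $\zeta^j_{T_n}$, which is all the integral sees.

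Your ``main obstacle'' also goes away, without restricting the class of claims. $\mathbb H$ is generated by the point processes $I$, $D$, $Y$, so every $\mathbb H$-martingale is purely discontinuous; this is the representation used in the proof of Theorem \ref{temp616}. Combined with Assumption \ref{assumption 2}, this gives $\tilde S^\ccH_t=\tilde S^\ccH_0+\sum_{T_n\le t}\Delta\tilde S^\ccH_{T_n}$. That is precisely the identity the paper uses to obtain $\int_0^t\zeta^*_u\,d\tilde S^\ccH_u=\sum_{T_n\le t}\zeta^j_{T_n}\Delta\tilde S^\ccH_{T_n}$. A nonzero $\int\zeta^*\,d\tilde S^{\ccH,c}$ could not have been ``absorbed'' into $L$ anyway, since it is not orthogonal to $\tilde S^\ccH$.

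Finally, square-integrability of this sum does not follow from the a.s.\ summability in Theorem \ref{GKW Decomposition} plus Assumption \ref{assumption 2}, because $\zeta^j$ may be unbounded. Use conditional Cauchy--Schwarz instead: $E^\mathbb{Q}[(\zeta^j_{T_n}\Delta\tilde S^\ccH_{T_n})^2\mid\ccH_{T_n^-}]\le E^\mathbb{Q}[(\Delta\tilde H_{T_n})^2\mid\ccH_{T_n^-}]$. Summing over $n$ bounds the second moment by $E^\mathbb{Q}[(\tilde H_T-\tilde H_0)^2]$.
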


\begin{proof}
    We must verify the two condition in Definition \ref{LRM}.
    First, we need to show that $\tilde V_T(\varphi^*)=e^{-rT}H_T$ a.s. By definition of the discounted value process, we have
    $$\tilde V_T(\varphi^*)=e^{-rT}(\zeta_T^* S_T^\ccH+\phi_T^*)$$
    Substituting \eqref{unique 2}, we get
    \begin{align*}
        \tilde V_T(\varphi^*)=e^{-rT}(\zeta_T^* S_T^\ccH+e^{rt}(\tilde H_T - \zeta_T^* \tilde S_T^\ccH))
        = \zeta_T^*\tilde S_T^\ccH+ \tilde H_T - \zeta_T^* \tilde S_T^\ccH = \tilde H_T.
    \end{align*}
    Since $\tilde H_t = E^\mathbb{Q}[e^{-rT}H_T|\ccH_t]$ and at time $T$ we have $\ccH_T$-measurability of $H_T$, then
    $$\tilde H_T = E^\mathbb{Q}[e^{-rT}H_T|\ccH_T]=e^{-rT}H_T.$$
    Therefore $\tilde V_T(\varphi^*)=e^{-rT}H_T$, which verifies the terminal condition.

    Second, the discounted cost process of $\varphi^*$ is
    \begin{align*}
        C_t(\varphi^*) &= \tilde V_t (\varphi^*)-\int_0^t \zeta_u^* d\tilde S_u^\ccH
        = e^{-rt}(\zeta_t^* S_t^\ccH+\phi_t^*)-\int_0^t \zeta_u^* d\tilde S_u^\ccH.
    \end{align*}
    Using \eqref{unique 2}, we have $\phi_t^* := e^{rt}(\tilde H_t - \zeta_t^* \tilde S_t^\ccH)$, so as in the first step, $e^{-rt}(\zeta_t^* S_t^\ccH+\phi_t^*) = \tilde H_t$. Therefore,
    $$C_t(\varphi^*) = \tilde H_t-\int_0^t \zeta_u^* d\tilde S_u^\ccH.$$
    Using the GKW decomposition \eqref{GKW equation}, 
    $$\tilde H_t = \tilde H_0 + \sum_{T_n \le t} \zeta^j_{T_n}\Delta\tilde S_{T_n}^\ccH+L_t,$$
    since $\zeta^*_{T_n}=\zeta^j_{T_n}$ at jump times, and $\tilde S_t^\ccH = \sum_{T_n \le t} \Delta\tilde S_{T_n}^\ccH$, we have that
    $$\int_0^t \zeta^*_u d\tilde S^{\ccH}_u =  \sum_{T_n \le t} \zeta^j_{T_n}\Delta\tilde S_{T_n}^\ccH.$$
    Substituting into the expression of $C_t(\varphi^*)$, we have
    $$C_t(\varphi^*)=\tilde H_0  + \sum_{T_n \le t} \zeta^j_{T_n}\Delta\tilde S_{T_n}^\ccH+L_t - \sum_{T_n \le t} \zeta^j_{T_n}\Delta\tilde S_{T_n}^\ccH=\tilde H_0 +L_t, $$
    where $L$ is the martingale orthogonal to $\tilde S^\ccH$ from the GKW decomposition. Hence the cost process is a square-integrable martingale strongly orthogonal to $\tilde S^\ccH$.
    
    Now consider any other admissible strategy $\varphi \in \Theta(\ccH)$. Its cost process satisfies
    $$C_t(\varphi) = \tilde V_t (\varphi)-\int_0^t \zeta_u d\tilde S_u^\ccH.$$
    Since both $\tilde V_t (\varphi)$ and $\tilde V_t (\varphi^*)$ are $\ccH_t$-measurable and $\tilde V_t (\varphi^*)=\tilde H_t$, we can write 
    \begin{align*}
        C_t(\varphi) &= \tilde V_t (\varphi)-\int_0^t \zeta_u d\tilde S_u^\ccH\\
        &= (\tilde V_t (\varphi)- \tilde H_t) + \tilde H_t -\int_0^t \zeta_u d\tilde S_u^\ccH\\
        &= (\tilde V_t (\varphi)- \tilde H_t) + \int_0^t \zeta_u^* d\tilde S_u^\ccH-\int_0^t \zeta_u d\tilde S_u^\ccH+\tilde H_0 +L_t\\
        &= C_t(\varphi^*) + \int_0^t (\zeta_u^*-\zeta_u)d\tilde S_u^\ccH + (\tilde V_t (\varphi)- \tilde H_t)
    \end{align*}
    For a mean self-financing strategy, the last term form a martingale (see Definition \ref{Mean self-financing}). The increment of the cost process is:
    $$dC_t(\varphi) =dL_t + (\zeta_t^*-\zeta_t)d\tilde S_t^\ccH + dM_t,$$
    where $M=\tilde V (\varphi)- \tilde H$.
    Taking the conditional variance and using the fact that $L \perp \tilde S$,
    \begin{align*}
E^\mathbb{Q}\big[(dC_t(\varphi))^2|\ccH_t\big]=E^\mathbb{Q}\big[(dL_t)^2|\ccH_t\big]&+(\zeta_t^*-\zeta_t)^2E^\mathbb{Q}\big[(d\tilde S_t^\ccH)^2|\ccH_t\big]+E^\mathbb{Q}\big[(dM_t)^2|\ccH_t\big]\\&+2E^\mathbb{Q}\big[dL_t \cdot (\zeta_t^*-\zeta_t)d\tilde S_t^\ccH|\ccH_t\big],
    \end{align*}
    where the cross term vanishes by the orthogonality $L \perp \tilde S^\ccH$, the first and third terms are independent of $\zeta$, while the second term is minimized uniquely at $\zeta_t = \zeta_t^*$, therefore
    $$E^\mathbb Q\big[(dC_t(\varphi^*))^2|\ccH_t\big]\le E^\mathbb Q\big[(dC_t(\varphi))^2|\ccH_t\big]\quad \text{for all $\varphi \in \Theta(\ccH)$}.$$
    If another strategy $\tilde \varphi$ also satisfies both condition, then $C_t(\tilde \varphi)=L_t$ by the same argument. The GKW decomposition is unique (Theorem \ref{GKW Decomposition}), so the integrand $\zeta^j$ are uniquely determined, implying $\tilde \varphi = \varphi^*$ a.s.
\end{proof}
\subsection{Hedging Formula Under Incomplete Information}
In order to make Theorem \ref{unique GKW} operational, we derive the explicit expression for the predictable jump hedging coefficient using the filtering distribution $\pi_t$ established in Section \ref{filtering sec}. The following result provides the local risk-minimizing hedge ratio at Announcement dates. To this end, we denote by $F_\xi(d\xi)$ the probability measure (or law) of the random jump size variable $\xi_n$, which is the marginal distribution of the joint mark distribution $F_Z(d\xi,d\eta)$.

\begin{proposition}\label{hedge ratio}
    Let $H = h(V_T)$, where $h: \mathbb R_{\ge 0} \to \mathbb R$ is continuously differentiable and define $u(t,v):= E^\mathbb Q[h(V_T)|V_t=v]$. Then, on the event $\{\tau > T_n\}$, the predictable jump hedging coefficient at the announcement time $T_n$ is given by
    \begin{align}\label{form of zeta}
        \zeta^j_{T_n} = e^{-r(T-2T_n)}\dfrac{\text{Cov}_\pi(J^H,J^S)}{\text{Var}_\pi(J^S)},
    \end{align}
    where $J^H$ and $J^S$ denote the jump responses of the claim and the traded asset, respectively, and writing $\tilde G(v,y,\xi)=G(v,y,\xi)-\kappa d_{T_n^-}$, we set
    $$J^H(v,\xi):=u(T_n,v+\tilde G(v,Y_{T_n^-},\xi))-u(T_n,v),$$
    $$J^S(v,\xi):=S(T_n,v+\tilde G(v,Y_{T_n^-},\xi),K_{T_n},d_{T_n},Y_{T_n})-S(T_n,v,K_{T_n^-},d_{T_n^-},Y_{T_n^-}),$$
    and the filtered (co)variances are:
    $$\text{Cov}_\pi(J^H,J^S):=\displaystyle\int_{\mathbb R_{\ge 0}} \int_{\mathbb R^m} J^H(v,\xi)J^S(v,\xi)F_\xi(d\xi)\pi_{T_n^-}(dv),$$
    $$\text{Var}_\pi(J^S) := \int_{\mathbb R_{\ge 0}} \int_{\mathbb R^m} (J^S(v,\xi))^2 F_\xi(d\xi)\pi_{T_n^-}(dv).$$
    
\end{proposition}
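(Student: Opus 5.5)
Starting point: the projection formula \eqref{jump integrand} from the proof of Theorem \ref{GKW Decomposition}, $\zeta^j_{T_n}=E^\mathbb Q[\Delta\tilde H_{T_n}\Delta\tilde S^\ccH_{T_n}\mid\ccH_{T_n^-}]/E^\mathbb Q[(\Delta\tilde S^\ccH_{T_n})^2\mid\ccH_{T_n^-}]$. The plan is to express the two jumps $\Delta\tilde H_{T_n}$ and $\Delta\tilde S^\ccH_{T_n}$ through the full-information value functions and then evaluate the $\ccH_{T_n^-}$-conditional expectations with the filter $\pi_{T_n^-}$. The constant $e^{-r(T-2T_n)}$ should come from the discount factors alone. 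The claim contributes $e^{-rT}$ through $\tilde H_t=e^{-rT}E^\mathbb Q[u(t,V_t)\mid\ccH_t]$, and the traded asset contributes $e^{-rT_n}$ through $\tilde S^\ccH_t=e^{-rt}S^\ccH_t$. In the ratio these give $e^{-rT}e^{-rT_n}/e^{-2rT_n}=e^{-r(T-2T_n)}$, as stated. So after pulling out discount factors it remains to show
\[
\frac{E^\mathbb Q[\Delta u\,\Delta S\mid\ccH_{T_n^-}]}{E^\mathbb Q[(\Delta S)^2\mid\ccH_{T_n^-}]}=\frac{\mathrm{Cov}_\pi(J^H,J^S)}{\mathrm{Var}_\pi(J^S)}.
\]

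First I will identify the jumps on $\{\tau>T_n\}$. By the Markov property of $(V,K,d,Y)$ under $\mathbb F$ and the tower property, the full-information quantities jump at $T_n$ exactly as the state does. The state jumps from $v$ to $v+\tilde G(v,Y_{T_n^-},\xi_n)$, with $K,d,Y$ updated, and this gives the increments $J^H(V_{T_n^-},\xi_n)$ and $J^S(V_{T_n^-},\xi_n)$. Since $V$ is continuous at $T_n$ apart from this jump, $u(T_n^-,\cdot)=u(T_n,\cdot)$ at the pre-jump value. Then I condition on $\ccH_{T_n^-}$. Because $\xi_n$ is independent of $\ccF_{T_n^-}\supset\ccH_{T_n^-}$ and of $V_{T_n^-}$ (Assumption \ref{ass:standing}(i)), and $Y_{T_n^-},K_{T_n^-},d_{T_n^-}$ are $\ccH_{T_n^-}$-measurable, the conditional law of $(V_{T_n^-},\xi_n)$ given $\ccH_{T_n^-}$ is $\pi_{T_n^-}(dv)\otimes F_\xi(d\xi)$. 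Integrating $J^HJ^S$ and $(J^S)^2$ against this product measure gives exactly $\mathrm{Cov}_\pi$ and $\mathrm{Var}_\pi$.

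The main obstacle is justifying that the jump of the filtered price $\Delta\tilde S^\ccH_{T_n}$ can be replaced by the full-information increment $J^S$ inside the $\ccH_{T_n^-}$-conditional second moments. Strictly, $S^\ccH_{T_n}$ integrates $S$ against the updated filter $\pi_{T_n}$, which involves the innovation $S(\varphi)(T_n,\Delta Y_{T_n})$ from Theorem \ref{temp616}, not against the true post-jump state. My approach is to write $\Delta\tilde S^\ccH=J^S+R$, where $R=E[S(\cdot)_{T_n}\mid\ccH_{T_n}]-S(\cdot)_{T_n}$ minus the analogous pre-jump term, and similarly for $H$. I will treat $R$ as part of the orthogonal residual $L$ in \eqref{GKW equation}, i.e.\ work with the projection onto the mark-driven jump component. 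This is the interpretation under which the formula holds, and I would state it explicitly, noting that it is exact when the announcement fully reveals the jump response.

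Finally, I will check integrability and well-definedness. Assumption \ref{ass:standing}(iii), Assumption \ref{assumption 2}, and $h\in C^1$ with square-integrable $u$ ensure the integrals are finite. If $\mathrm{Var}_\pi(J^S)=0$, the convention $\zeta^j=0$ applies. The off-default restriction $\{\tau>T_n\}$ is needed so that the pre-jump value functions are evaluated in their pre-default form.
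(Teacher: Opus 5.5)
Your overall route is the paper's: start from \eqref{jump integrand}, replace the two jumps by the full-information responses $J^H(V_{T_n^-},\xi_n)$ and $J^S(V_{T_n^-},\xi_n)$, integrate against $\pi_{T_n^-}\otimes F_\xi$, and collect discount factors. The step that fails is the last one, and it is plain arithmetic: $e^{-rT}e^{-rT_n}/e^{-2rT_n}=e^{-r(T-T_n)}$, not $e^{-r(T-2T_n)}$. Your bookkeeping is actually the consistent one. Since $u(t,v)=E^{\mathbb Q}[h(V_T)\mid V_t=v]$ is undiscounted, $\tilde H_t=e^{-rT}E^{\mathbb Q}[u(t,V_t)\mid\ccH_t]$, so $\Delta\tilde H_{T_n}$ carries $e^{-rT}$ and $\Delta\tilde S^{\ccH}_{T_n}$ carries $e^{-rT_n}$. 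The paper reaches $e^{-r(T-2T_n)}$ only by writing $E^{\mathbb Q}[e^{-rT}h(V_T)\mid V_t=v]=e^{-r(T-t)}u(t,v)$, hence $\Delta\tilde H_{T_n}=e^{-r(T-T_n)}J^H$, so that the numerator carries $e^{-rT}$ in total. That identity contradicts the definition of $u$: it pairs the undiscounted claim price with the discounted equity price. So no consistent version of this computation produces the displayed constant. Your bookkeeping yields $e^{-r(T-T_n)}$, which is also the economically natural factor (jump of the claim price $e^{-r(T-T_n)}u$ over jump of $S$). You should conclude with that factor and flag the discrepancy with \eqref{form of zeta}, rather than assert an equality that is false exactly where your argument meets the statement.

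Two further justifications do not hold, although the paper makes the same substitutions without comment. First, $u(T_n^-,\cdot)\neq u(T_n,\cdot)$. For $t<T_n$ the function $u(t,\cdot)$ already averages over the scheduled jump, so $u(T_n^-,v)=\int u(T_n,v+\tilde G(v,Y_{T_n^-},\xi))\,F_\xi(d\xi)$ (up to the dividend randomness). The martingale jump of $t\mapsto u(t,V_t)$ at $T_n$ is therefore $J^H(V_{T_n^-},\xi_n)-\int J^H(V_{T_n^-},\xi)\,F_\xi(d\xi)$. Thus $J^H(V_{T_n^-},\xi_n)$ is not the claim's jump even under full information, and the quantities called $\mathrm{Cov}_\pi$, $\mathrm{Var}_\pi$ are uncentered moments; your appeal to continuity of $V$ does not justify it. Second, you rightly notice that $\Delta\tilde S^{\ccH}_{T_n}$ is the jump of the \emph{filtered} price, which the paper passes over with ``conditioning further on the filtered state distribution''. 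But your remedy of moving $R$ into $L$ is not admissible. $R$ is part of the increment of the traded instrument, and \eqref{jump integrand} projects onto the full $\Delta\tilde S^{\ccH}_{T_n}$. Moreover, $J^S(V_{T_n^-},\xi_n)$ is not $\ccH_{T_n}$-measurable, so it is not the jump of any $\mathbb H$-martingale. Projecting onto it changes the coefficient and destroys $L\perp\tilde S^{\ccH}$. Nor is the substitution exact under full revelation at $T_n$: the discrepancy between $e^{rT_n}\Delta\tilde S^{\ccH}_{T_n}$ and $J^S(V_{T_n^-},\xi_n)$ is then $S(T_n,V_{T_n^-},\ldots)-\int S(T_n^-,v,\ldots)\,\pi_{T_n^-}(dv)$, a nonzero random variable in general. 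Finally, $K_{T_n},d_{T_n},Y_{T_n}$ inside $J^S$ are not $\ccH_{T_n^-}$-measurable, since they depend on $V_{T_n^-}$, $d_n$ and $\eta_n$. So $E^{\mathbb Q}[\,\cdot\mid\ccH_{T_n^-}]$ must integrate them out as well, and the product law $\pi_{T_n^-}\otimes F_\xi$ alone does not produce $\mathrm{Cov}_\pi$ and $\mathrm{Var}_\pi$. Without an explicitly stated approximation hypothesis, the identification of the conditional moments with these filtered integrals is not proved.
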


\begin{proof}
   Fix a predictable announcement time $T_n$ and work on the event $\{\tau > T_n\}$, so that the firm has not defaulted prior to the information arrival.
    First, we have $\tilde H_t = E^\mathbb Q[e^{-rT}h(V_T)|\ccH_t]$, conditionally on the latent firm value $V_t =v$, the Markov property of $V$ implies 
    $$E^\mathbb Q[e^{-rT}h(V_T)|V_t=v]=e^{-r(T-t)}u(t,v).$$
    At the predictable time $T_n$, the firm value jump from $V_{T_n^-}$ to $V_{T_n}=V_{T_n^-}+G(V_{T_n^-},Y_{T_n^-},\xi)-\kappa d_{T_n^-}:=V_{T_n^-}+ \tilde G(V_{T_n^-},Y_{T_n^-},\xi)$. Therefore the jump of the discounted claim satisfies
    \begin{align*}
        \Delta \tilde H_{T_n}= e^{-r(T-T_n)}\big(u(T_n,V_{T_n^-}+ \tilde G(V_{T_n^-},Y_{T_n^-},\xi))-u(T_n,V_{T_n^-})  \big).
    \end{align*}
    Under incomplete information, $\tilde S_t^\ccH = e^{-rt}\int_{\mathbb R_{>0}} S(t,v,K_t,d_t,Y_t)\pi_t(dv)$. Define the claim jump
    $$J^H(v,\xi):=u(T_n,v+ \tilde G(v,Y_{T_n^-},\xi))-u(T_n,v),$$
    where $\tilde G(v,y,\xi)=G(v,y,\xi)-\kappa d_{T_n^-}$. and let $J^S(v,\xi)$ denote the corresponding equity jump function. Using the projection formula from Theorem \ref{GKW Decomposition}, the predictable hedging coefficient is 
    \begin{align*}
        \zeta^j_{T_n} = \dfrac{E[\Delta \tilde{H}_{T_n} \Delta \tilde S_{T_n}^\ccH |\ccH_{T_n^-} ]}{E[(\Delta \tilde{S}_{T_n}^\ccH)^2 |\ccH_{T_n^-} ]}.
    \end{align*}
    Conditioning further on the filtered state distribution $\pi_{T_n^-}$ yields
    $$E[\Delta \tilde H_{T_n} \Delta \tilde S_{T_n}^\ccH |\ccH_{T_n^-} ]=e^{-rT}\displaystyle\int_{\mathbb R_{>0}}\int_{\mathbb R^m}J^H(v,\xi)J^S(v,\xi)F_\xi(d\xi)\pi_{T_n^-}(dv),$$
    and similarly
    $$E[(\Delta \tilde{S}_{T_n}^\ccH)^2 |\ccH_{T_n^-} ] = e^{-2rT_n}\displaystyle\int_{\mathbb R_{>0}}\int_{\mathbb R^m}(J^S(v,\xi))^2F_\xi(d\xi)\pi_{T_n^-}(dv)$$
    This proves the general representation Equation \eqref{form of zeta}. 
\end{proof}

\begin{remark}
    When the equity price is homogeneous of degree one in the firm value,
     $S(t,v,\cdots)=v\cdot h(t,\cdots),$  
     the jump function satisfies 
     $J^S(v,\xi)=\tilde{S}_{T_n^-}^\ccH\cdot \Gamma,$ where $\Gamma(v,y,\xi):=\dfrac{\tilde G(v,y,\xi)}{v}$. So  Equation \eqref{form of zeta} becomes
     \begin{align*}
         \zeta^j_{T_n} = \dfrac{e^{-r(T-2T_n)}\displaystyle\int_{\mathbb R_{\ge 0}} \int_{\mathbb R^m} \bigg(u(T_n,v+\tilde G(v,Y_{T_n^-},\xi))-u(T_n,v)\bigg)\Gamma(v,Y_{T_n^-},\xi)F_\xi(d\xi)\pi_{T_n^-}(dv)}{\tilde{S}_{T_n^-}^\ccH\displaystyle\int_{\mathbb R_{\ge 0}} \int_{\mathbb R^m}\big(\Gamma(v,Y_{T_n^-},\xi)\big)^2F_\xi(d\xi)\pi_{T_n^-}(dv)}.
     \end{align*}
     In all numerical implementation, Equation \eqref{form of zeta} should be used, with $J^S$ evaluated from the particle approximation of $S$ at the post-jump value.
\end{remark}
The hedge ratio depends on the filtered distribution immediately before each scheduled announcement and therefore incorporates both the parameter uncertainty and disclosure risk. Unlike the full information delta hedge, the optimal strategy is computed with respect to investor beliefs rather than the latent firm value. Consequently, predictable disclosure events affect hedging not only through the jump dynamics of the asset process but also through the evolution of the posterior distribution.

\section{Numerical Implementation and Comparison}\label{sec:numerical}
This section illustrates the impact of predictable disclosure shocks on valuation, filtering, and hedging. We compare the proposed structural model with the partial information diffusion model of \cite{frey2009pricing}, which serves as the natural benchmark. The numerical experiments are designed to isolate the effect of predictable disclosure shocks on credit spreads, posterior beliefs, equity valuation, and hedging. Throughout, we specialize to the scalar setting $m=p=1$.

\subsection{Model Calibration}
\paragraph{Simulation framework} 
The firm's asset value is simulated under the risk-neutral measure $\mathbb Q$ using the dynamics introduced in Section \ref{sec:Model}. Between announcement dates, the diffusion component is discretized with an Euler-Maruyama scheme using a time step $\Delta t =0.01$ year, corresponding to approximately $2.5$ trading days. Scheduled disclosures occur quarterly at times $T_n =0.25n$ years, where the state-dependent jump mechanism introduced in Section \ref{sec:Model} is applied.

\paragraph{Benchmark model}
To isolate the effect of predictable disclosure shocks, we compare our model with the partial information structural model of \cite{frey2009pricing}. Their filter is implemented using the Markov chain approximation of proposition 3.4, whereas the proposed model employs the particle approximation of Section \ref{filtering sec}.

\paragraph{Calibration}
 Unless otherwise stated, all experiments use the baseline parameters reported in 
 Table \ref{Baseline Parameter}. The calibration represents a highly leveraged distressed firm and is chosen to generate realistic default probabilities and credit spreads while highlighting the economic effect of scheduled disclosure risk. The same parameter set is used throughout all numerical experiments unless explicitly varied.
 \begin{tiny}
\begin{table}[h!]
\centering
\caption{Baseline Parameter Values}
\label{Baseline Parameter}
\begin{tabular}{llcl}
\toprule
\textbf{Parameter} & \textbf{Symbol} & \textbf{Value} & \textbf{Description} \\
\midrule
Risk-free rate & $r$ & 3\% & Annual \\
Asset drift (risk-neutral) & $\mu_V$ & 5\% & Annual \\
Asset volatility & $\sigma_V$ & 25\% & Annual \\
Initial asset value & $V_0$ & 100 & \\
Initial liability & $K_0$ & 60 & Default barrier \\
Dividend smoothing & $\lambda$ & 0.3 & \\
Dividend noise & $\delta_k$ & Beta$(1,20)$ & Mean = 4.76\% \\
Observation noise & $\sigma_\eta$ & 0.10 & \\
Dividend financing & $\kappa$ & 1 & Pure asset sale \\
Loss given default & $\ell$ & 60\% & For CDS \\
\midrule
\multicolumn{4}{l}{\textit{Jump parameters (Extended model only):}} \\
Jump distribution & $\xi_n$ & Beta$(2,8)$ & Mean = 20\% write-down \\
Jump amplification & $\beta$ & 0.5 & Bad news multiplier \\
Observation threshold & $\bar{y}$ & 5 & For amplification \\
Observation persistence & $\alpha_y$ & 0.1 & In $f(V,Y) = \ln V + \alpha_y Y$ \\
\bottomrule
\end{tabular}
\end{table}
\end{tiny}
The same parameter set is used throughout all numerical experiments unless explicitly varied. 

\paragraph{Computational implementation}
All simulations are implemented in Python using NumPy and SciPy. Reported quantities are averaged over $10^3-10^4$ independent Monte Carlo paths. Simulations are performed on Apple M1 iMac with 16 GB RAM, and the complete numerical study requires less than one hour, including Hedging experiments.

\subsection{Numerical Methodology}
The numerical comparison requires solving two distinct filtering problems. The benchmark diffusion model of \cite{frey2009pricing} is implemented using the Markov-chain approximation described in their proposition 3.4, whereas the proposed model employs the particle filter developed in Section \ref{filtering sec}.
\begin{algorithm}[h!]
\caption{ Baseline filter, No Jumps (\cite{frey2009pricing} Filter)}
\label{algo 1}
\begin{algorithmic}
\State \textbf{Input:} Grid $\{m_1, \ldots, m_M\}$ on $[0.8K_0, 2V_0]$, observations $\{(d_k, \Delta Y_{T_k})\}_{k=1}^K$
\State \textbf{Initialize:} $q_j(0) = \mathbb{I}_{\{m_j \approx V_0\}} / \text{(normalization)}$
\For{$k = 1, 2, \ldots, K$}
    \State \textbf{Propagate between observations:} For $t \in (T_{k-1}, T_k)$, use trinomial transition matrix
    \State \quad $q(t) = P(t-T_{k-1})^{\lfloor (t-T_{k-1})/\Delta t \rfloor} q(T_{k-1})$
    \State \textbf{Update at observation $T_k$:}
    \State \quad Remove states below barrier: $q_j(T_k^-) = 0$ if $m_j \le K_{T_k}$
    \State \quad Dividend likelihood: $\ell_j^{(d)} = \nu_d(d_k | d_{k-1}, m_j)$
    \State \quad Observation likelihood: $\ell_j^{(Y)} = \phi(\Delta Y_{T_k} - \ln m_j; 0, \sigma_\eta^2)$ where $\phi$ is Gaussian density
    \State \quad Update: $q_j(T_k) \propto q_j(T_k^-) \cdot \ell_j^{(d)} \cdot \ell_j^{(Y)}$
    \State \quad Normalize: $q(T_k) \leftarrow q(T_k) / \sum_j q_j(T_k)$
\EndFor 
\State \textbf{Output:} Filtered mean $\hat{V}_t = \sum_{j=1}^M m_j q_j(t)$.%
\end{algorithmic}
\end{algorithm}
Algorithm \ref{algo 1} is used exclusively for the benchmark model and provides the reference prices and hedge ratios used throughout the numerical comparison.
\begin{tiny}
\scriptsize
\begin{algorithm}[h!]
\caption{Particle Filter for Extended Model}
\label{Particle filter}
\begin{algorithmic}
\scriptsize
\State \textbf{Input:} $N$ particles, observations $\{(d_k, \Delta Y_{T_k})\}_{k=1}^K$
\State \textbf{Initialize:}  ${X_0^{(i)} \sim \mathcal N(\ln V_0, 0.05^2)}$ for $i = 1, \ldots, N$, \quad $Y_0 = 0$, $w_0^{(i)} = 1/N$
\For{$k = 1, 2, \ldots, K$}
    \State \textbf{Propagate between observations} $t \in (T_{k-1}, T_k)$:
    \For{each time step $\Delta t$}
        \State \quad $X_t^{(i)} \leftarrow X_{t-\Delta t}^{(i)} + (\mu_V - \sigma_V^2/2)\Delta t + \sigma_V \sqrt{\Delta t} \cdot Z^{(i)}$, $Z^{(i)} \sim \mathcal N(0,1)$
        \State \quad $V_t^{(i)} = e^{X_t^{(i)}}$
        \State \quad \textbf{Check default:} If $V_t^{(i)} \le K_t$, set $w_t^{(i)} = 0$
    \EndFor
    \State
    \State \textbf{At observation time $T_k$:}
    \State \quad \textbf{Step 1: Pre-announcement update (Anticipation Effect)}
    \For{each particle $i$}
        \State \quad Store pre-jump value: $V_{T_{k^-}}^{(i)} = e^{X_{T_{k^-}}^{(i)}}$ 
        \State \quad Drawn $\xi_k^{(i)} \sim F_\xi$
        \State \quad Compute amplification: $A^{(i)} = 1 + \beta \cdot \mathbf{1}_{Y_{T_{k-1}} < \bar{y}}$
        \State \quad Drawn dividend $d_k^{(i)}\sim \nu_d\!\left(\cdot \mid d_{k-1},V_{T_k^-}^{(i)},Y_{T_{k-1}}\right)$.
        \State \quad Realized jump: ${G}^{(i)} = -\xi_k^{(i)} \cdot V_{T_{k^-}}^{(i)} \cdot A^{(i)}$
        \State \quad Update particle: $V_{T_k}^{(i)} = \max(V_{T_{k^-}}^{(i)} + {G}^{(i)} - \kappa {d}_k^{(i)},\varepsilon_\text{floor})$
        \State \quad Set $X_{T_k}^{(i)} = \ln V_{T_k}^{(i)}$
    \EndFor
    \State
    \State \quad \textbf{Step 2: Observation update (Innovation)}
    \State \quad Observe $\Delta Y_{T_k}$ and $d_k$
    \For{each particle $i$}
        \State \quad Expected observation: $\hat{Y}^{(i)} = \ln V_{T_{k}}^{(i)} + \alpha_y \cdot Y_{T_{k-1}}$
        \State \quad Observation likelihood: $\ell_Y^{(i)} = \phi(\Delta Y_{T_k} - \hat{Y}^{(i)}; 0, \sigma_\eta^2)$
        \State \quad Dividend likelihood: $\ell_d^{(i)} = \phi(d_k - d_k^{(i)}; 0, (0.1 d_k^{(i)})^2)$
        \State \quad Combined likelihood: $\ell^{(i)} = \ell_Y^{(i)} \cdot \ell_d^{(i)}$
        \State \quad Update weight: $w_{T_k}^{(i)} = w_{T_{k^-}}^{(i)} \cdot \ell^{(i)}$
    \EndFor
    \State \quad Normalize: $w_{T_k}^{(i)} \leftarrow w_{T_k}^{(i)} / \sum_j w_{T_k}^{(j)}$
    \State
    \State \quad \textbf{Step 3: Resample if needed}
    \State \quad Compute $N_{\text{eff}} = 1 / \sum_i (w_{T_k}^{(i)})^2$
    \If{$N_{\text{eff}} < N/3$}
        \State \quad Resample particles with replacement according to weights
        \State \quad Reset weights: $w_{T_k}^{(i)} = 1/N$
    \EndIf
    \State
    \State \quad Update observation history: $Y_{T_k} = Y_{T_{k-1}} + \Delta Y_{T_k}$
\EndFor
\State \textbf{Output:} Filtered distribution $\pi_t \approx \sum_{i=1}^N w_t^{(i)} \delta_{X_t^{(i)}}$.
\end{algorithmic}
\end{algorithm}
\end{tiny}
Algorithm \ref{Particle filter} implements the proposed nonlinear filtering framework with predictable disclosure updates and is used for all numerical experiments involving the extended model.
\newpage
\subsection{Effect of Predictable disclosure risk}
We first investigate how scheduled disclosures modify investors beliefs, credit spreads, and equity values relative to the diffusion-based benchmark of \cite{frey2009pricing}. The results demonstrate that predictable announcement risk introduces systematic effects that cannot arise in structural models with continuous information.

\paragraph{Posterior dynamics and sentiment-dependent jumps}
\begin{figure}[h]
\centering
\includegraphics[width=0.95\textwidth]{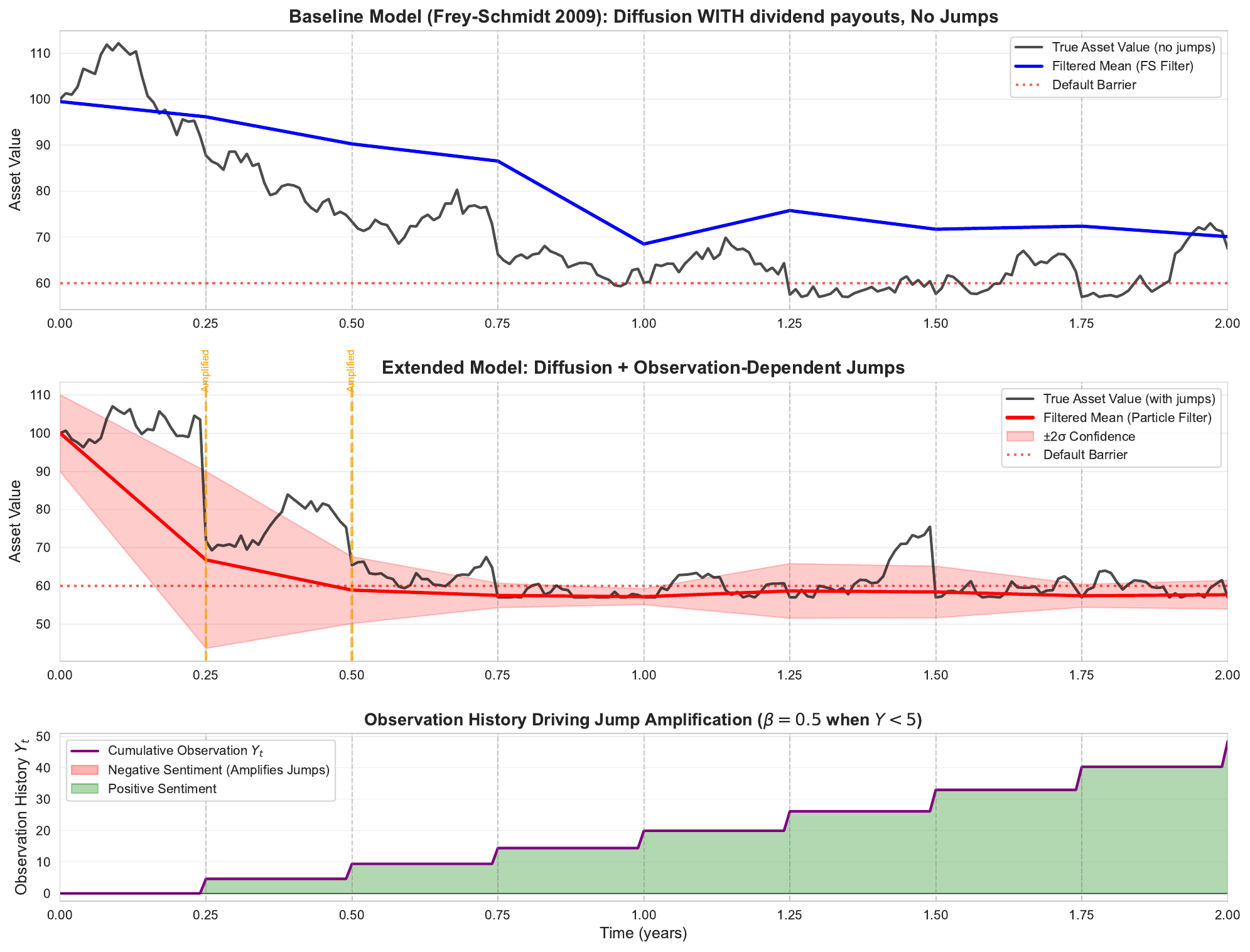}
\caption{Comparison of filtered asset values with negative sentiment scenario: Baseline vs Extended model with observation-dependent jumps.}
\label{fig:filtered_comparison}
\end{figure}
The comparison highlights three features. First, in the baseline model, the filtered mean tracks the true asset path closely throughout. Without jumps, dividend observations provide only weak continuous updating and the filter has little uncertainty to resolve between announcement: the absence of informational shocks leaves the posterior largely undisturbed between scheduled dates. Second, the extended model (middle panel) presents a richer picture. Between announcements the filter faces genuine uncertainty about the firm's value, reflected in the wide $\pm 2\sigma$ confidence bands; at each disclosure date, however, that uncertainty collapses sharply as the observation increment $\Delta Y_{T_n}$ is revealed. The first two announcements, at $t=0.25$ and $t=0.5$, trigger amplified downward revisions by factor $1+\beta =1.5$ because cumulative observation history $Y_t$ lies the threshold $\bar y=5$ at those dates, activating the $\beta$-amplification channel visible in the bottom panel. later jumps are smaller as sentiment recovers and posterior updates become less dramatic. Third, the bottom panel quantifies the sentiment dynamics.

\emph{Economic interpretation:} Predictable jumps transform scheduled announcement dates from mechanical dividend payouts into genuine informational events. In the baseline, corporate dates reduce asset value by a known amount and leave the filter largely undisturbed. In the extended model, the same dates concentrate both default risk and filtering uncertainty simultaneously, producing the sharp posterior revisions and pre-announcement spread widening that characterize observed credit market behaviour around earnings releases and rating reviews \cite{kothari2009effect}. The asymmetry between early and late announcements is not a calibration artifact but a structural feature of the model: adverse news compounds when the informational environment is already pessimistic. To be more precise, the figure reveals that the filtering problem in this framework carries two coupled unknowns rather than one: the investor must simultaneously learn the firm's unobservable asset value and the prevailing sentiment regime driving jump amplification, and it is precisely this coupling that standard diffusion-based filtering models cannot capture.

\paragraph{Credit spread dynamics}
Credit spreads are computed for maturities between $0.25$ and $5$ years under both neutral and negative sentiment using Monte Carlo simulation with $50,000$ paths.
\begin{figure}[h!]
\centering
\includegraphics[width=0.95\textwidth]{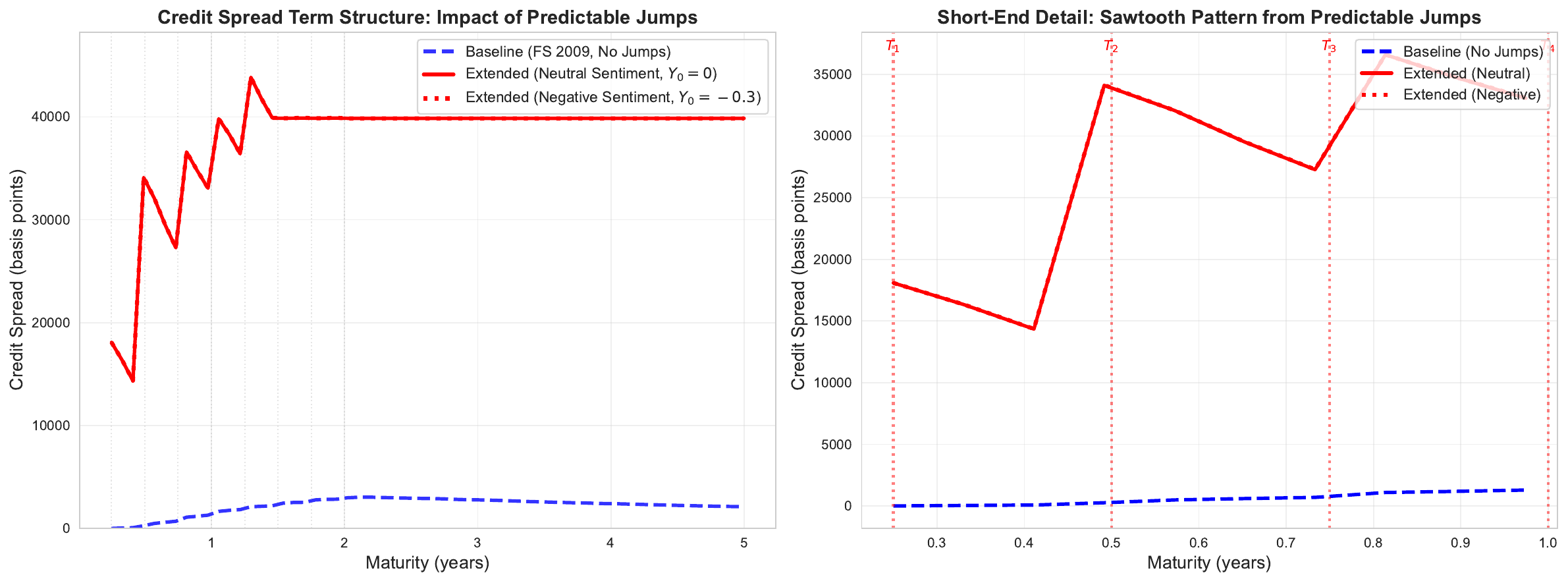}
\caption{Credit spread term structures with observation-dependent jumps. 
}
\label{fig:spread_comparison_final}
\end{figure}
Figure \ref{fig:spread_comparison_final} reveals two effects that are absent from the diffusion only baseline. Predictable disclosures generate a characteristic sawtooth term structure: spreads widen prior to scheduled announcements as investors anticipate adverse disclosure shocks and fall immediately afterwards once uncertainty is resolved. Negative sentiment amplifies this mechanism through larger expected jump sizes, leading to systematically higher spreads during the early announcement periods. Neither feature is present in the diffusion benchmark, where information arrives continuously. 

\paragraph{Equity Valuation with Path-Dependent Jumps}
\begin{figure}[h!]
\centering
\includegraphics[width=0.75\textwidth]{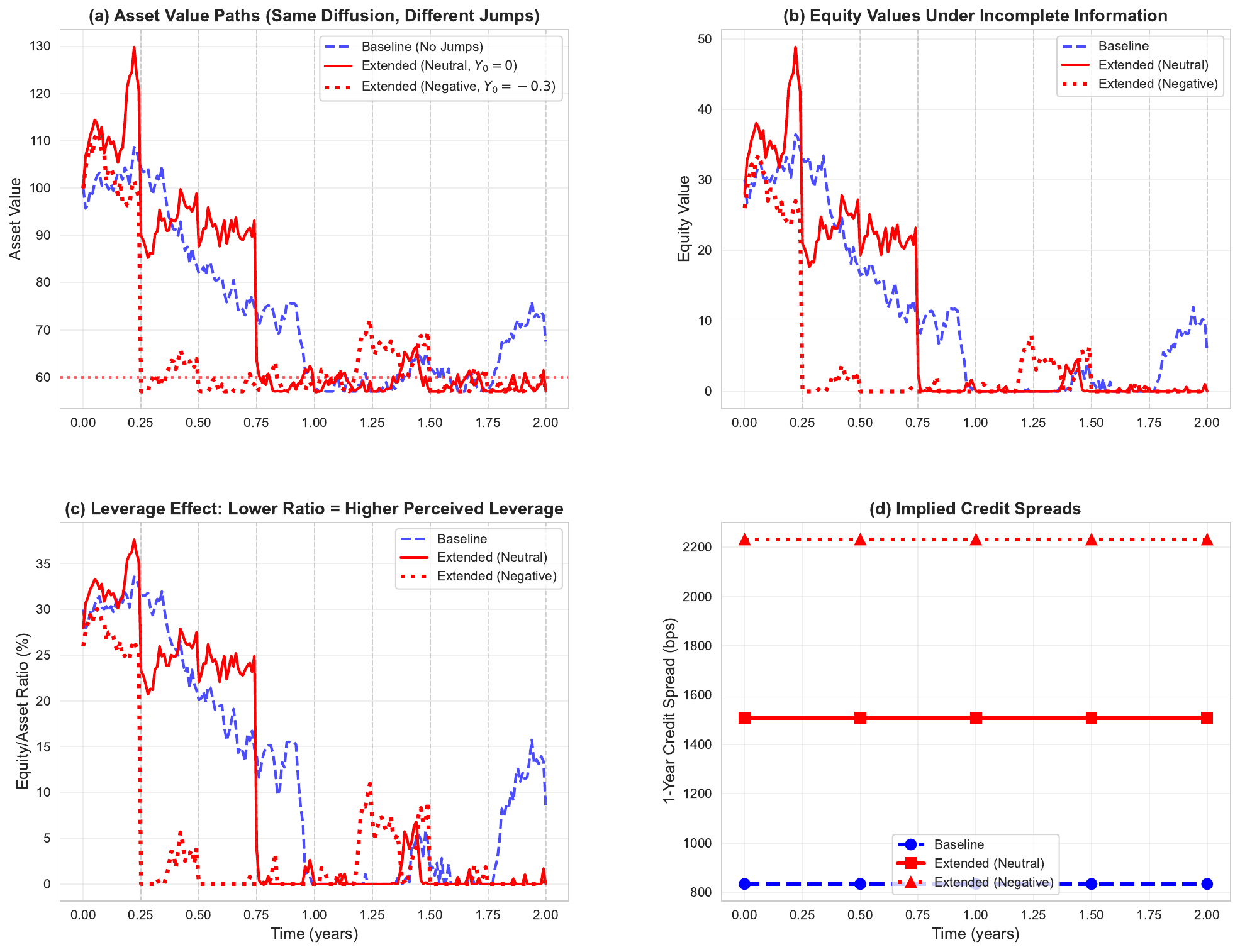}
\caption{Equity valuation with observation-dependent jump risk.}
\label{fig:equity_paths}
\end{figure}
Figure \ref{fig:equity_paths} illustrates the effect of predictable disclosure risk introduces on equity valuation. Since all scenarios share the same underlying diffusion realization, differences in equity values arise solely from anticipated disclosure shocks. Negative sentiment generates a persistent valuation discount because investors continue to assign positive probability to future amplified disclosures even after the asset paths have largely reconverged. This effect increases perceived leverage and translates into wider credit spreads relative to the diffusion benchmark. Consequently, models that ignore predictable disclosure risk systematically overvalue equity and underestimate credit risk during periods of adverse market sentiment.

\subsection{Filtering Performance}\label{filtering_performance}
 We next evaluate the accuracy of the proposed nonlinear filter and examine the relationship between posterior uncertainty and default risk.
\paragraph{Filtering estimation errors}
\begin{figure}[h!]
\centering
\includegraphics[width=0.75\textwidth]{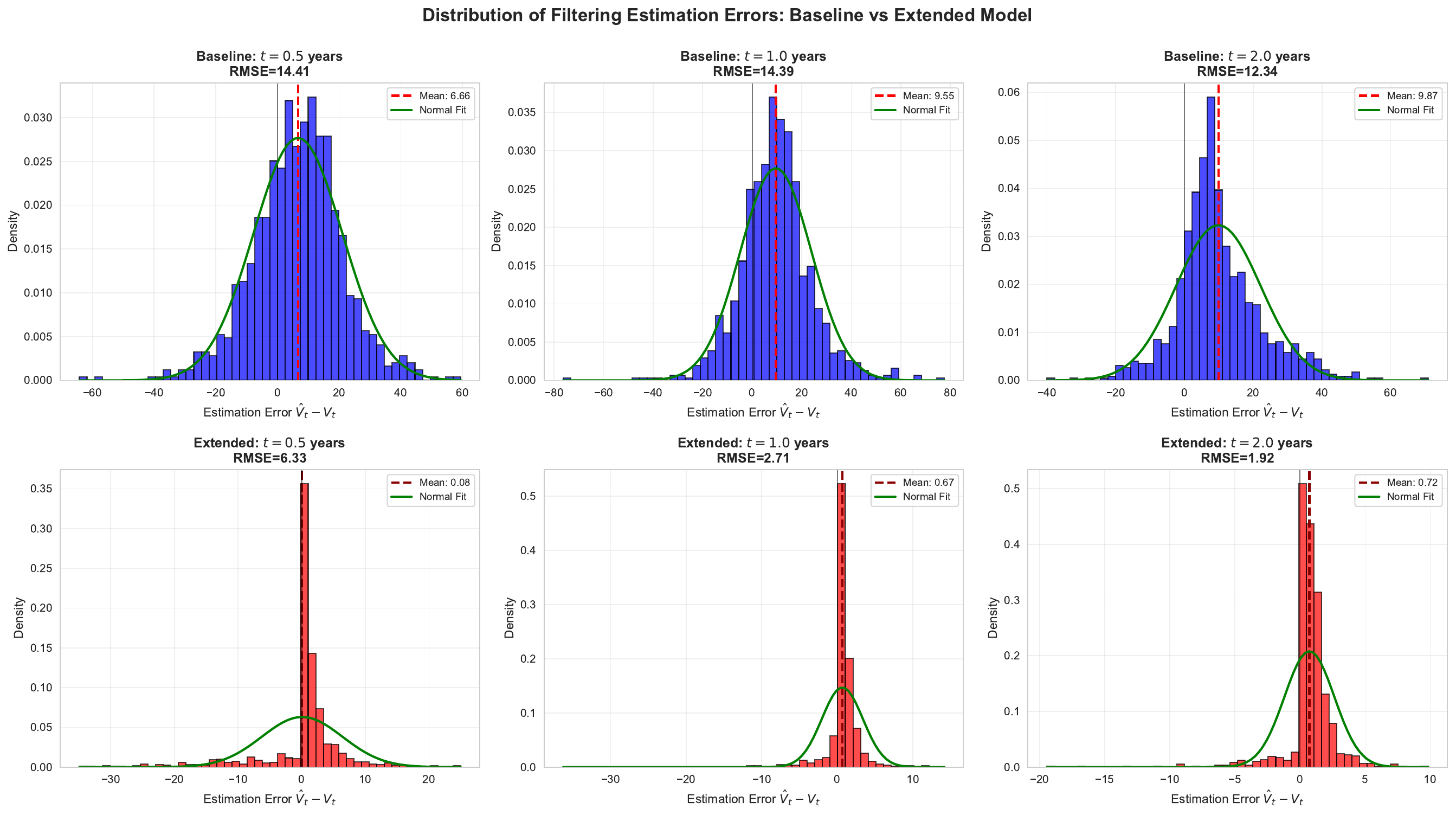}
\caption{Distribution of filtering estimation errors $\hat{V}_t - V_t$ at different time horizons (1000 simulated paths).}
\label{fig:estimation_errors}
\end{figure}

Figure \ref{fig:estimation_errors} compares the distribution of filtering errors at three horizons. The diffusion benchmark exhibits a gradual increase in estimate error, with the RMSE rising from $18.53$ after six months to $25.70$ at two years as uncertainty accumulates between observations. In contrast, the proposed model achieves progressively lower errors, with the RMSE decreasing from $6.33$ to $1.92$ over the same period. Although scheduled disclosures introduce temporary jump uncertainty, they also provide highly informative observations that substantially reduce posterior uncertainty. Repeated Bayesian updates therefore offset the accumulation of diffusion noise and improve long-run filtering accuracy.
\newpage
\paragraph{Equity valuation under partial information}
Figure \ref{fig:equity_scatter} illustrates how predictable disclosure risk alters the relationship between firm value and equity value. Under the diffusion benchmark, equity is closely related to the current asset value, producing an almost linear relationship. In contrast, the proposed model exhibits both a flatter slope and greater dispersion because equity prices incorporate the possibility of future sentiment-dependent disclosure shocks. Consequently, firms with identical asset values may have substantially different equity valuations depending on their posterior information state, generating an additional discount that is absent from the benchmark model.

\begin{figure}[h]
\centering
\includegraphics[width=0.85\textwidth]{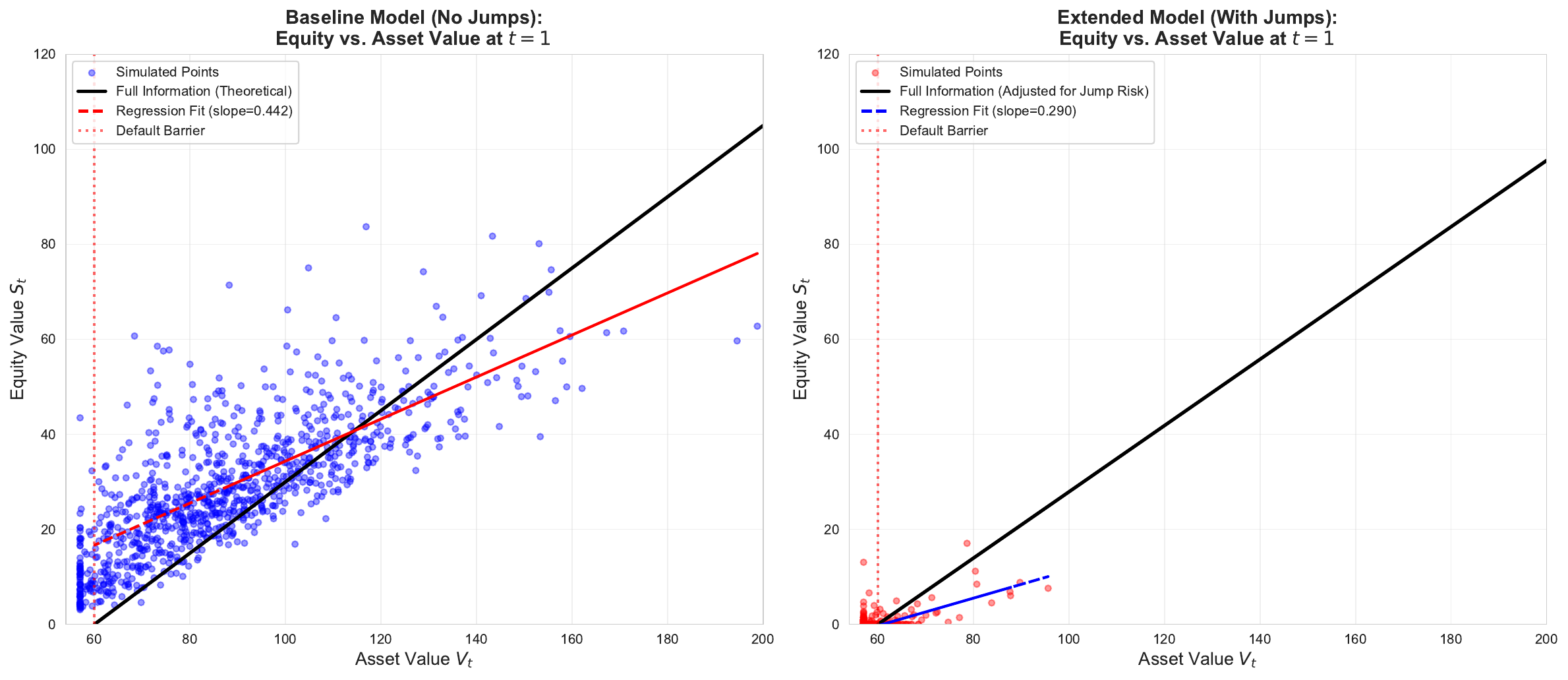}
\caption{Equity vs. asset value scatter plots at $t=1$ year ($1000$ simulated paths). }
\label{fig:equity_scatter}
\end{figure}
These results imply that disclosure risk weakens the sensitivity of equity to improvements in firm value and therefore alters incentives for financing and investment decisions. In particular, firms have greater incentives to issue equity immediately after favorable disclosures, when announcement uncertainty has been resolved.

\subsection{Hedging performance}\label{sec:hedging numeric}
Finally, we illustrate the practical implications of the local risk-minimizing strategy in Section \ref{sec 5: Hedging}. 
We compared the proposed incomplete-information strategy with the corresponding full-information and diffusion based benchmark hedges to quantify the impact of predictable disclosure risk on hedging performance.

The numerical experiment considers the contingent claim $H = (V_T - 1.5K_0)^+$ hedged dynamically using the firm's equity. At each scheduled disclosure date, the local risk-minimizing hedge ratio is computed from Proposition \ref{hedge ratio} using the pre-announcement filter $\pi_{T_n^-}$. 

\begin{figure}[h!]
    \centering
    \includegraphics[width=0.85\textwidth]{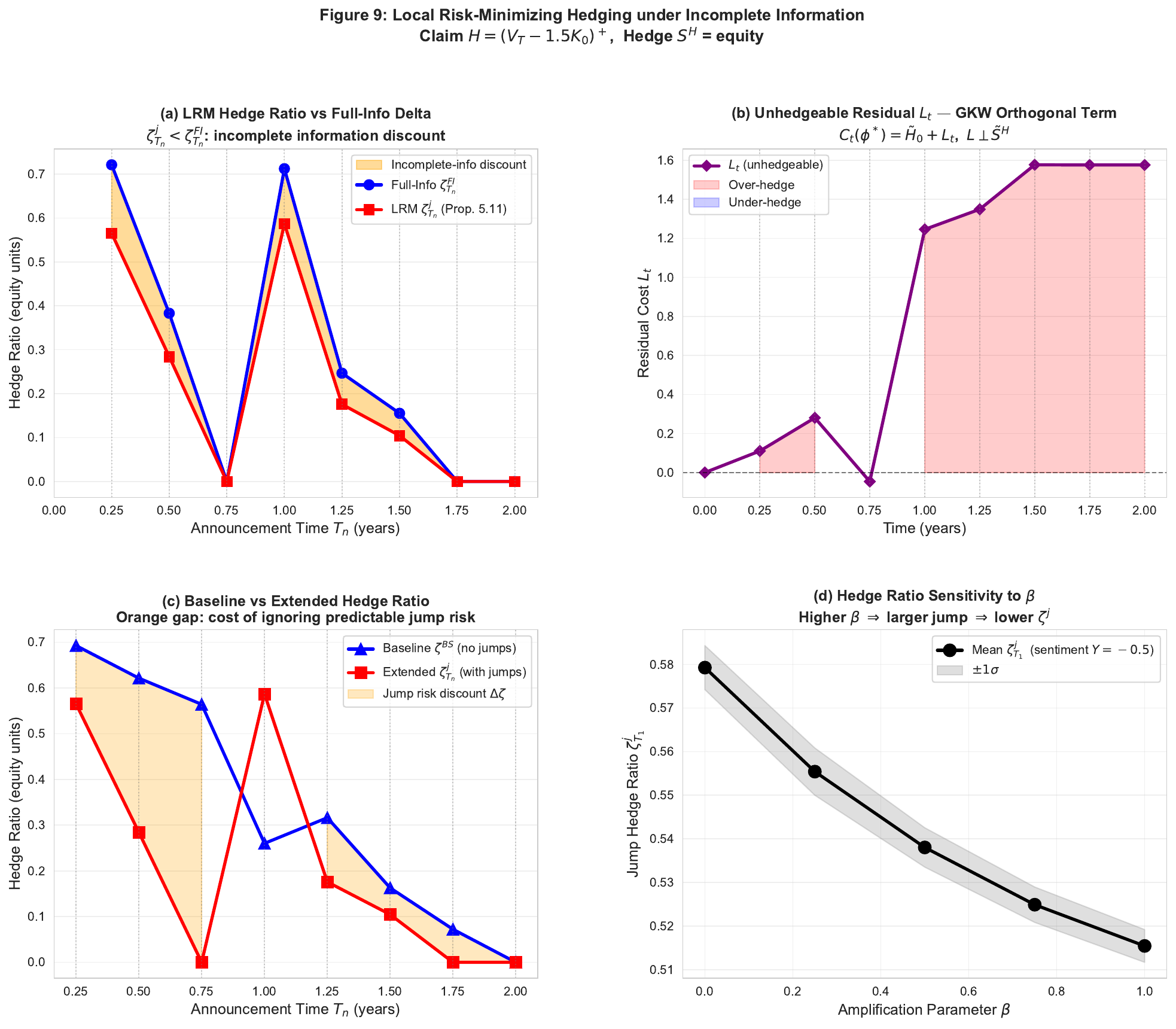}
    \caption{Hedging under incomplete information with predictable jumps. 
    The claim is $H = (V_T - 1.5K_0)^+$ and the hedge instrument is equity $S^\ccH$.
    }
    \label{fig:hedging_lrm}
\end{figure}

Figure \ref{fig:hedging_lrm} illustrates four consequences of predictable disclosure risk. First, the incomplete information hedge ratio is systematically lower than its full-information counterpart because investors hedge against the posterior distribution rather than the latent asset value. Second, the orthogonal residual increases after major disclosure dates, reflecting the portion of jump risk that cannot be replicated using the traded equity alone. Third, the diffusion benchmark either over-or under-hedges depending on the contract's moneyness, demonstrating that ignoring predictable jumps introduces systematic hedging errors. Finally, stronger sentiment amplification reduces the optimal hedge ratio by increasing the sensitivity of equity prices to disclosure shocks.

These results confirm that predictable disclosure risk affects hedging through both the jump dynamics of firm value and the Bayesian evolution of investor beliefs. Even when disclosure dates are known in advance, incomplete information generates residual risk that cannot be eliminated through dynamic trading. 

\subsection{Sensitivity analysis}
We next investigate the robustness of the model by varying the amplification parameter, the observation noise, and the disclosure frequency while holding all remaining parameters at their baseline values. 
We present in Table \ref{tab:parameter sensitivity} the comprehensive sensitivity analysis.
\begin{table}[http]
\centering
\caption{\textbf{Parameter sensitivity analysis: Impact on key model outputs}}
\label{tab:parameter sensitivity}
\resizebox{\linewidth}{!}{
\begin{threeparttable}
\begin{tabular}{llccccc}
\toprule
\multicolumn{2}{l}{Parameter \& Value}
  & \begin{tabular}[c]{@{}c@{}}1Y Spread\\(bps)\end{tabular}
  & \begin{tabular}[c]{@{}c@{}}1Y Default\\Prob (\%)\end{tabular}
  & \begin{tabular}[c]{@{}c@{}}Filter\\RMSE\end{tabular}
  & \begin{tabular}[c]{@{}c@{}}Equity\\Value\end{tabular}
  & \begin{tabular}[c]{@{}c@{}}Jump Risk\\Share (\%)\end{tabular}\\
\midrule
\multicolumn{7}{l}{\textit{A.~Amplification Factor $\beta$ (Observation Dependence)}}\\
\midrule
          & $\beta = 0.00$ & 31{,}512 & 95.7 & 6.2 & 6.36 & 77.1 \\
          & $\beta = 0.25$ & 36{,}100 & 97.3 & 5.7 & 5.05 & 81.6 \\
Baseline  & $\beta = 0.50$ & 39{,}900 & 98.2 & 4.9 & 4.05 & 85.3 \\
          & $\beta = 0.75$ & 43{,}900 & 98.8 & 4.3 & 3.12 & 87.9 \\
          & $\beta = 1.00$ & 46{,}300 & 99.0 & 3.8 & 2.59 & 89.8 \\
\midrule
\multicolumn{2}{l}{Change (\%) $0\to1$}
  & $+47\%$ & $+3\%$ & $-39\%$ & $-59\%$ & $+17\%$\\
\midrule
\multicolumn{7}{l}{\textit{B.~Observation Noise $\sigma_\eta$ (Information Quality)}}\\
\midrule
          & $\sigma_\eta = 0.05$ & 39{,}318 $\pm$ 1,255 & 97.78 $\pm$ 0.20 &  2.73 & 4.167 $\pm$ 0.071 & 84.9 $\pm$ 0.6 \\
          & $\sigma_\eta = 0.10$ & 39{,}318 $\pm$ 1,255 & 97.78 $\pm$ 0.20 &  2.71 & 4.167 $\pm$ 0.071 & 84.9 $\pm$ 0.6 \\
Baseline  & $\sigma_\eta = 0.15$ & 39{,}127 $\pm$ 1,249 & 97.70 $\pm$ 0.21 &  3.04 & 4.187 $\pm$ 0.071 & 84.8 $\pm$ 0.5 \\
          & $\sigma_\eta = 0.20$ & 38{,}703 $\pm$ 1,064 & 97.70 $\pm$ 0.20 &  3.47 & 4.206 $\pm$ 0.071 & 84.7 $\pm$ 0.5 \\
          & $\sigma_\eta = 0.25$ & 38{,}644 $\pm$ 1,078 & 97.68 $\pm$ 0.21 & 3.76 & 4.224 $\pm$ 0.072 & 84.5 $\pm$ 0.5 \\
\midrule
\multicolumn{2}{l}{Change (\%) $0.05\to0.25$}
  & $-1.7\%$ & $-0.1\%$ & $+38\%$ & $+1.4\%$ & $-0.5\%$\\
\midrule
\multicolumn{7}{l}{\textit{C.~Jump Frequency (Observation Frequency)}}\\
\midrule
Baseline & Quarterly (4/yr)  & 37{,}297 $\pm$ 2,252         & 97.60$\pm$ 0.54           & 2.75 & 4.162 $\pm$ 0.167 & 85.5 $\pm$ 1.2 \\
         & Monthly  (12/yr)  & $>$200{,}000     & 100.0 $\pm$ 0.00           & 0.79 & 3.753 $\pm$ 0.114 & 94.5 $\pm$ 0.6 \\
         & Weekly   (52/yr)  & $>$200{,}000     & $100$ $\pm$ 0.00 & 0.17 & 3.460 $\pm$ 0.123 & 100.0 $\pm$ 0.0 \\
\midrule
\multicolumn{2}{l}{Change (\%) $4\to52$/yr}
  & $\uparrow$ & $\uparrow$ & $-94\%$ & $-17\%$ & $+17\%$\\
\midrule
\multicolumn{7}{l}{\textit{D.~Combined Effects: Worst vs Best Case}}\\
\midrule
Best Case
  & $\beta{=}0,\;\sigma_\eta{=}0.05$, 4/yr
  & 29,004 $\pm$ 1,282 & 94.57 $\pm$ 0.71 &  3.23 & 6.46 $\pm$ 0.23 & 76.8 $\pm$ 1.4 \\
Baseline
  & $\beta{=}0.5,\;\sigma_\eta{=}0.10$, 4/yr
  & 44{,}228 $\pm$ 2,472 & 98.8 $\pm$ 0.30 &  2.41 & 3.94 $\pm$ 0.14 & 84.3 $\pm$ 1.1 \\
Worst Case
  & $\beta{=}1.0,\;\sigma_\eta{=}0.25$, 52/yr
  & $>$200{,}000 & 100 $\pm$ 0.00 & 0.27 & 2.437 $\pm$ 0.088 & 100 $\pm$ 0.0 \\
\midrule
\multicolumn{2}{l}{Range Best $\to$ Worst}
  & $\uparrow\uparrow$ & $+6\%$ & $-92\%$ & $-62\%$ & $+30\%$\\
\bottomrule
\end{tabular}
\begin{tablenotes}
\footnotesize
\item[1] \textbf{Jump Risk Share} = percentage of defaults attributable to
  jump-induced barrier crossings at scheduled announcement dates.
\item[2] \textbf{Filter RMSE} = root-mean-squared error of the particle filter
  estimate of $V_t$ at $t=1$ year over all simulated paths. Filter RMSE values in Panels B-D use full (non-subsampled) particle filtering on every simulated path ($N=2000$ particles per path). Panel B uses common random numbers across $\sigma_\eta$ values to enable paired-difference significance testing (see note 3); Panel C and D use independent batch means, since scenarios there differ in disclosure frequency and cannot be path-paired.
\item[3] \textbf{Inference for Panel B}
  $\sigma_\eta$ has small but statistically significant effect on filtered outcomes. Spread, default probability, and equity move monotonically but modestly over the tested range (a $1$-$2\%$ change from $\sigma_\eta=0.05$ to $0.25$), consistent with the mechanism whereby larger observation noise occasionally allows the sentiment indicator $Y$ to cross the reassurance threshold $\bar{y}$ sooner than its deterministic drift alone would imply, slightly reducing the incidence of amplified disclosure shocks. Filtering accuracy is essentially unaffected between $\sigma_\eta=0.05$ and $0.10$ (paired $t=-0.29$, not significant) but degrades significantly and monotonically for $\sigma_\eta \ge 0.10$ (RMSE rising from $2.71$ to $3.76$, a $\sim 38\%$ increase; paired-difference $t$-statistics of $7.4$-$12.6$ across successive steps, $t=11.75$ comparing the extremes). All reported values use common random numbers across $\sigma_\eta$ with $N=5,000$ simulated paths; standard errors are computed via batch means ($20$ batches).
\end{tablenotes}
\end{threeparttable}}
\end{table}
The amplification parameter $\beta$ has the largest impact on credit risk. Increasing $\beta$ raises credit spreads and default probabilities while reducing equity values, reflecting the stronger effect of adverse disclosures on firm value. At the same time, filtering accuracy improves because larger disclosure shocks provide more informative observations.

Observation noise primarily affects filtering accuracy. As expected, larger observation noise substantially increases the RMSE of the particle filter while leaving equilibrium prices almost unchanged, since the underlying asset dynamics remain unaffected.
Increasing the disclosure frequency strengthens the role of predictable jumps. More frequent announcements increase the proportion of jump-induced defaults and reduce equity values, confirming that scheduled disclosures become the dominant driver of credit risk under frequent information releases.

\begin{figure}[h!]
    \centering
    \includegraphics[width=0.95\linewidth]{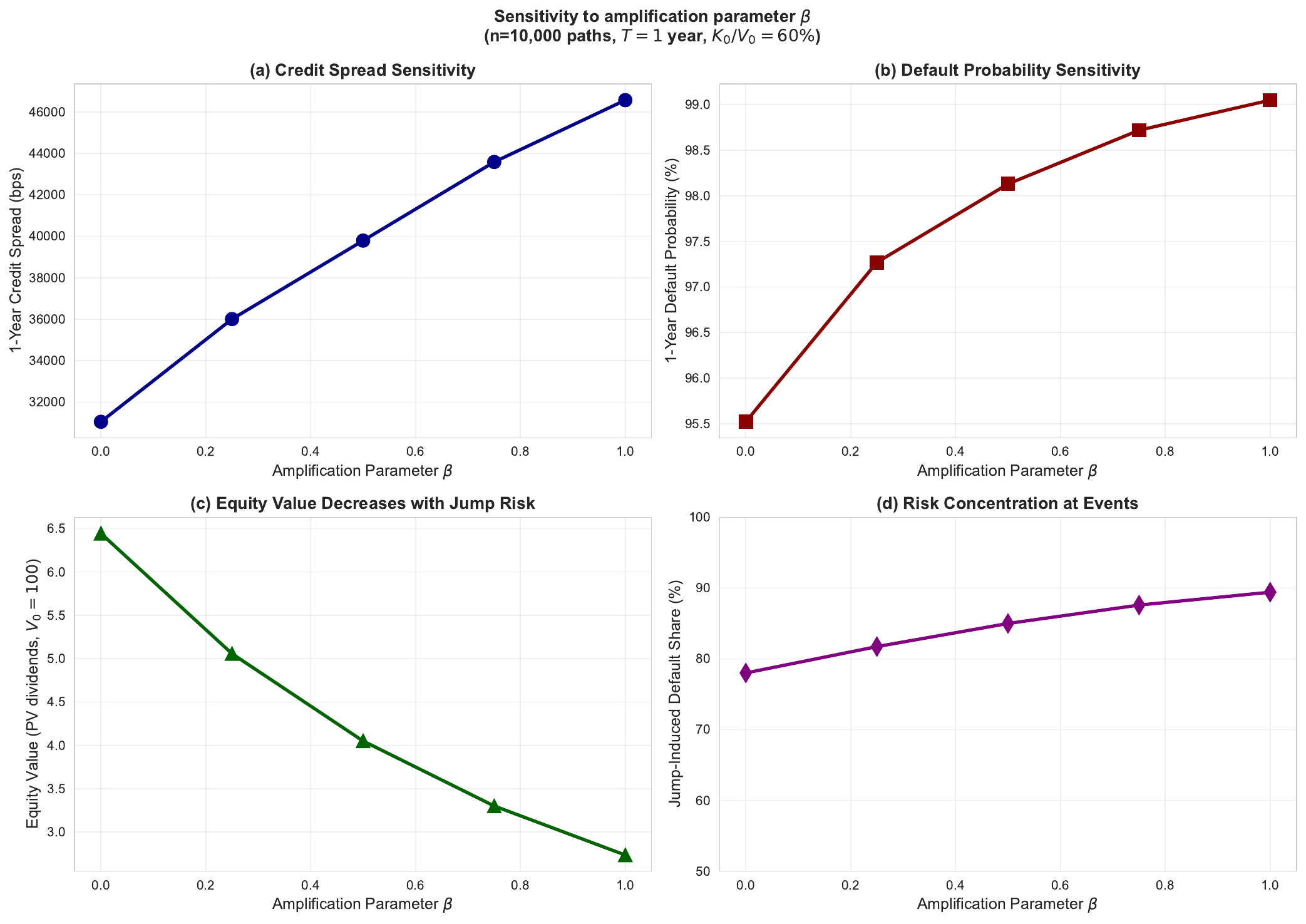}
    \caption{Sensitivity to the amplification parameter $\beta$}
    \label{fig:beta_sensitivity}
\end{figure}
 Figure \ref{fig:beta_sensitivity} further illustrates the nonlinear effect of the amplification parameter. Credit spreads increase convexly with $\beta$, while default probabilities approach saturation because of the highly leveraged calibration. In contrast, equity values decrease almost linearly, reflecting the increasing expected cost of disclosure induced losses. The proportion of jump induced defaults also rises steadily, confirming that predictable disclosure events dominate continuous default as amplification increase.

 With all above results, the numerical experiments consistently demonstrate that predictable disclosure risk influences pricing, filtering and hedging simultaneously. Scheduled announcements not only improve state estimation through Bayesian updating but also generate economically significant effects on credit spreads, equity values, and hedging strategies that are absent from diffusion based structural models.

\appendix
\section{Mathematical Appendix}\label{Appendix A}
\begin{proof}[Proof of Proposition \ref{filter convergence}]
The proof proceeds in four steps. We first construct the unnormalized filtering distribution and its particle approximation. We then establish boundedness of the likelihood weights. Third, we invoke the standard convergence theory for sequential Monte Carlo methods. Finally, we pass to the normalized filter.

First, we fix $t>0$ and let $n_t=\#\{T_m\le t\}$ denote the number of predictable announcement dates before time $t$. Since $n_t<\infty$ almost surely by Assumption \ref{filter condition}.3, only finitely many Bayesian updates occur on every finite time interval. Let $\tilde V$ denote the reference signal process governed by Equation \eqref{eq:asset_dynamics}. The associated unnormalized filtering distribution is

$$\tilde\pi_t(\varphi) = E^{\mathbb Q}[\varphi(\tilde V_t)\mathcal L_t|\, \ccH_t],$$
where the likelihood process is
$$\mathcal L_t=\prod_{T_m\le t}\ell_m\mathbf1_{\{\tilde V_s>K_s,\;0\le s\le t\}},$$
with
$\ell_m=\nu_d(d_m|d_{m-1},\tilde V_{T_m^-},Y_{T_m^-})p_\eta\left(\Delta Y_{T_m}-f(\tilde V_{T_m^-},Y_{T_m^-})\right).$
The corresponding normalized filter is
$$\pi_t(\varphi)=\frac{\tilde\pi_t(\varphi)}{\tilde\pi_t(1)},$$
where $\tilde\pi_t(1)>0$ almost surely on the event of survival.
The particle approximation is
$\tilde\pi_t^N(\varphi)=\dfrac1N\sum_{i=1}^Nw_t^{(i)}\varphi(V_t^{(i)}),$
with normalized estimator
$$\pi_t^N(\varphi)=\dfrac{\tilde\pi_t^N(\varphi)}{\tilde\pi_t^N(1)}.$$

Second, by Assumption \ref{filter condition}.4, $p_\eta(\eta)\le C_\eta,$ while Assumption \ref{filter condition}.5 yields $$\nu_d(d_m|d_{m-1},v,y)\le C_d$$ uniformly over all admissible $(v,y)$. Hence every likelihood factor satisfies
$\ell_m\le C_dC_\eta=:C_\ell.$
Since the survival indicator is bounded by one,
$$0\le w_t^{(i)}=\prod_{T_m\le t}\ell_m^{(i)}\mathbf1_{\{\tilde V_s^{(i)}>K_s\}}\le C_\ell^{\,n_t}=:M_t,$$
where $M_t<\infty$ almost surely because $n_t<\infty$. Furthermore, since $\varphi\in C_b^2(\mathbb R)$, $|\varphi(v)|\le \|\varphi\|_\infty,$ and therefore
$\left|w_t^{(i)}\varphi(V_t^{(i)})\right|\le M_t\|\varphi\|_\infty.$
Consequently, the particle potentials are uniformly bounded on every finite time interval.

Third, the particle system generated by Algorithm \ref{Particle filter} is therefore a standard Feynman-Kac particle approximation of the nonlinear filtering distribution with bounded potential functions.
Under Assumption \ref{filter condition}, the signal process possesses finite moments of order $p>2$, the transition kernel is Lipschitz, the observation likelihood is bounded, and only finitely many predictable jump updates occur on every compact time interval. These conditions are precisely those required in the standard convergence theory for sequential Monte Carlo methods (see, e.g., Del Moral \cite{del2004feynman}, Crisan and Doucet \cite{crisan2002survey}, and Bain and Crisan \cite{bain2009crisan}). Hence, for every bounded test function $\varphi$,
 $\tilde\pi_t^N(\varphi)\longrightarrow \tilde\pi_t(\varphi), \text{almost surely}$ as $N\rightarrow\infty$.
Applying the same result with $\varphi\equiv1$ gives
$$\tilde\pi_t^N(1)\longrightarrow\tilde\pi_t(1)>0\qquad \text{almost surely}.$$

Fourth, since both numerator and denominator converge almost surely and $\tilde\pi_t(1)>0,$ the Continuous Mapping Theorem yields
\begin{align*}
    \pi_t^N(\varphi)=\dfrac{\tilde\pi_t^N(\varphi)}{\tilde\pi_t^N(1)}\longrightarrow \dfrac{\tilde\pi_t(\varphi)}{\tilde\pi_t(1)}= \pi_t(\varphi)
\end{align*}
almost surely as $N\rightarrow\infty$.
\end{proof}

\section*{Acknowledgments}
The author thanks Thorsten Schmidt for his continuous strong support and for reading the earlier version. Also  Wilfried Kuissi Kamdem for helpful discussions and for reading the earlier version. FBTN is funded the Deutsche Forschungsgemeinschaft (DFG, German Research Foundation) – Project-ID 499552394 – SFB 1597 Small Data.

\bibliographystyle{plain}
\bibliography{references}

\end{document}